\newtheorem{theorem}{Theorem}[section]
\newtheorem{proposition}[theorem]{Proposition}
\newtheorem{corollary}[theorem]{Corollary}
\newtheorem*{theorem*}{Theorem}
\newtheorem{definition}[theorem]{Definition}
\newtheorem{example}[theorem]{Example}
\numberwithin{equation}{section}
\newcommand{\R}{\mathbb{R}}
\newcommand{\Ff}{\mathbb{F}}
\newcommand{\F}{\mathcal{F}}
\newcommand{\cS}{\mathcal{S}}
\newcommand{\Prob}{\mathbb{\Prob}}
\newcommand{\mytilde}{\raise.17ex\hbox{$\scriptstyle\mathtt{\sim}$}}
\def\a{\alpha}             
\def\b{\beta}                                    
\def\l{\lambda}
\begin{document}
\title[Nonlinear reserving]{Nonlinear reserving and multiple contract modifications in life insurance
}

\author{Marcus C.~Christiansen  and Boualem Djehiche}

\address{Institute of Mathematics \\ Carl von Ossietzky University, 26111 Oldenburg, Germany} \email{marcus.christiansen@uni-oldenburg.de}
\address{Department of Mathematics \\ KTH Royal Institute of Technology \\ 100 44, Stockholm \\ Sweden} \email{boualem@kth.se}

\date{This version \today}

\subjclass[2010]{60H10, 91B30}

\keywords{Implicit options, prospective reserve, nonlinear semi-Markov chain, Backward SDEs, Thiele equation}

\begin{abstract} 
Life insurance cash flows become reserve dependent when contract conditions are modified during the contract term on condition that actuarial equivalence is maintained. As a result, insurance cash flows and prospective reserves depend on each other in a circular way, and it is a non-trivial problem to solve that circularity and make cash flows and prospective reserves well-defined. In Markovian models, the (stochastic) Thiele equation and the Cantelli Theorem are the standard tools for solving the circularity issue and for maintaining actuarial equivalence. This paper expands the stochastic Thiele equation and the Cantelli Theorem to non-Markovian frameworks and presents a recursive scheme for the calculation of multiple contract modifications. 
\end{abstract}

\maketitle


\section{Introduction}

Life insurance products typically comprise implicit options. This involves guaranteed components as well as rights to modify contract conditions during the contract term, see e.g.~Gatzert (2009) for an overview. In recent years  insurers and regulators paid increasing attention to the proper pricing and reserving for contracts with implicit options. 
In the actuarial literature there are numerous papers on market evaluations of implicit financial guarantees, but the mathematical modelling of premium payment modifications and modifications of insurance coverage is still underdeveloped. This paper helps to close that gap. 

The prospective reserve of a life insurance contract is defined as the conditional expectation of the aggregated and discounted future insurance cash flow given the currently available information. Traditionally, the insurance cash flow is defined first, and then the prospective reserve is defined and calculated on the basis of that cash flow. However, in case that the insurance cash flow depends also on the prospective reserve, then we have a circular structure and the classical definition of the prospective reserve becomes an implicit equation for which existence and uniqueness of a solution are  in general unclear. In a Markovian framework Djehiche \& L\"{o}fdahl (2016) showed that the circularity problem is equivalent to solving a  backward stochastic differential equation (BSDE). In case that the cash flow satisfies certain Lipschitz conditions, then the BSDE has a unique solution and the prospective reserve and the cash flow are well-defined. In this paper we generalize that concept to a non-Markovian framework, which automatically includes popular semi-Markovian models. 

In Djehiche \& L\"{o}fdahl (2016) the cash flow at a certain time point may depend on the reserve at the same time but not on the reserve at earlier time points. This restriction is fine when we model surrender options, but it excludes various other modifications of premium payments and insurance coverage. For example, think of a  free policy option where the insurance cash flow after exercising the option depends on the reserve at the time of exercising the option.  One possibility is to adhere to the BSDE approach,  but the mild Lipschitz conditions that Djehiche \& L\"{o}fdahl (2016) use have to be replaced with much more restrictive Lipschitz conditions that are usually not satisfied in practice. Instead, in this paper we suggest a recursive scheme that runs forward in time through the contract modifications. Our results differ from the existing literature in two ways: First, we allow for an unbounded number of contract modifications. Second, we avoid any kind of Markov assumptions.  

The most widely studied kinds of reserve dependent insurance cash flows are surrender payments upon lapse, see e.g.~M{\o}ller \& Steffensen (2007) and references therein. In case that the Cantelli Theorem applies, surrender may be simply ignored, see e.g.~Milbrodt \& Stracke (1997). If actuarial equivalence is not fully maintained but the dependence on the reserve is linear, explicit formulas are still within reach, see Christiansen et al.~(2014).  If the dependence is not necessarily linear but at least Lipschitz continuous, then the  BSDE concept of Djehiche \& L\"{o}fdahl (2016) gives a general answer on how to define and calculate reserves in the presence of lapse. Here we further expand the results of Djehiche \& L\"{o}fdahl (2016) to non-Markovian models. 

The second most studied option is the free policy option,  see e.g.~Hendriksen et al.~(2014), Buchardt et al.~(2015) and Buchardt \& M{\o}ller (2015). Here, 
based on the Cantelli Theorem, an adjustment factor is applied on the life insurance cash flow at the contract modification such that  actuarial equivalence is maintained. We generalize that concept to non-Markovian models. Furthermore, we allow for arbitrarily many contract modifications of all kinds.  

The paper is organized as follows: In section 2 we define the state dynamics of a life insurance policy and its corresponding life insurance cash flow. We also show the link to martingale theory, which becomes relevant in the sections to follow. Section 3 introduces the prospective reserve as the solution of a  backward stochastic differential equation and extends the results of Djehiche \& L\"{o}fdahl (2016) to non-Markovian frameworks. In section 4 we add the possibility of an unbounded number of contract modifications and discuss the definition and calculation of prospective reserves under actuarial equivalence conditions.  

\section{Life insurance policy modeling}\label{Section-LifeInsurPolicyModelling}
\subsection{State dynamics of a life insurance policy} \label{Subsection_2_1}

In the multi-state framework within life insurance on some   finite state space $\cS\subset \mathbb{N}_0$, the evolution of an insurance policy  is usually described by an $\cS$-valued c{\`a}dl{\`a}g (right continuous with left limits) pure jump process $X$, starting at a deterministic state $X(0)=x_0\in\cS$, defined on the completed filtered probability space $(\Omega,\F^0,\Ff^0=(\F^0_t)_{t\ge 0},P)$, where $\Ff^0$ is the completed natural filtration of $X$. The filtration  $\Ff^0$ satisfies the usual conditions since the natural filtration of $X$ is always right-continuous, see Theorem 2.2.4 in Last \& Brandt (1995). 

To $X$ we associate the indicator process $I_i(t)=\mathbf{1}_{\{X(t)=i\}}$ whose value is $1$ if $X$ is in state $i$ at time $t$ and $0$ otherwise, and the counting processes defined by
$$
N_{ij}(t):=\#\{s\in(0,t]:X(s-)=i, X(s)=j\},\quad N_{ij}(0)=0,
$$
which count the number of jumps from state $i$ into state $j$ during the time interval $(0,t]$.
Since $X$ is c{\`a}dl{\`a}g, $I_i$ and $N_{ij}$ are c{\`a}dl{\`a}g as well. Moreover, by the relationship
\begin{equation*}
X(t)=\sum_i i\,I_i(t),\quad I_i(t)=I_i(0)+\underset{j: j\neq i}\sum\left(N_{ji}(t)-N_{ij}(t)\right),
\end{equation*}
the state process, the indicator processes, and the counting processes carry the same information which is represented by the natural filtration  $\Ff^0$ of $X$.
Let $0=T_0<T_1<T_2<\ldots $ denote the jump times of the process and
$$
N(t):=\sum_{n=1}^{\infty}\mathbb{1}_{\{T_n\le t\}}=\sup\{n, \, T_n\le t\}=\sum_{i,j : i\ne j} N_{ij}(t).
$$
For each $t\ge 0$, let $U(t)$ be the time spent in the current state $X(t)$, i.e.
\begin{equation*}
U(t)=t-T_{N(t)},\qquad i.e. \,\,\, U(t)=t-T_n,\,\,\mbox{if}\,\,  T_{n}\le t<T_{n+1},\,\, n \in \mathbb{N}.
\end{equation*}
%
Two popular models for the pure jump process have been considered in the literature, cf.~Christiansen (2012):
\begin{example}[Markov models]
The process  $X$ is assumed to be Markovian.
\end{example}

\begin{example}[semi-Markov models]
The process $\tilde{X}:=(X,U)$ is assumed to be Markovian.
\end{example}

\subsection{Associated martingales} 
In the sequel, we make the following standing assumption:
\begin{itemize}
\item[(A1)] The compensators of the counting processes  $N_{ij}(t)$, $i,j \in \mathcal{S}$, $i \neq j$, have Lebesgue-densities $I_i(t-)\l_{ij}(t)$, $i,j \in \mathcal{S}$, $i \neq j$, that  satisfy
\begin{equation*}
E\left[\int_0^T\sum_{i,j: i\neq j} I_i(t-)\l_{ij}(t)dt\right]<\infty.
\end{equation*}
\end{itemize}
We denote the processes $(\l_{ij})_{ij}$ as jump intensities (or transition intensities). If $X$ is a Markov process, then the jump intensities $(\l_{ij})_{ij}$ are deterministic, whereas in general they are predictable processes.
Mimicking the proof of  Lemma 21.13 in Rogers \& Williams (2000), the compensated processes associated with the counting processes $N_{ij}$, defined by
 \begin{align}\label{mart-0}
M_{ij}(t)=N_{ij}(t)-\int_0^t I_i(s-)\l_{ij}(s)\, ds,\quad M_{ij}(0)=0,
\end{align}
are zero mean, square integrable and mutually orthogonal $P$-martingales. 
We call  $M:=\{M_{ij},\, i\neq j\}$ the accompanying martingale of the counting process $N:=\{N_{ij},\, i\neq j\}$ or of the process $X$.
Let $\{Z_{ij}, \, i\neq j\}$ be a family of predictable processes and set
\begin{equation*}
\|Z(t)\|^2_{\Lambda}:=\sum_{i,j:i\neq j}Z^2_{ij}(t)I_i(t-)\l_{ij}(t),\quad 0< t\le T.
\end{equation*}
The local martingale
\begin{equation*}
\int_{(0,t]}Z(s)dM(s):=\sum_{i,j:  i\neq j}\int_{(0,t]} Z_{ij}(s)dM_{ij}(s)
\end{equation*}
is a square-integrable martingale if
\begin{equation*}
E\left[\int_0^T \|Z(s)\|^2_{\Lambda}ds\right]<\infty
\end{equation*}
since the following Doob inequality holds:
\begin{equation*}
E\left[\sup_{0\le t\le T}\left|\int_{(0,t]}Z(s)dM(s)\right|^2\right]\le 4 E\left[\int_0^T \|Z(s)\|^2_{\Lambda}ds\right].
\end{equation*}
Since $X(0)$ is deterministic and  the filtration $\mathbb{F}^0$ generated by  $X$ is the
same as the filtration generated by the family of counting processes $\{N_{ij},\, i\neq j\}$, we state the following martingale representation theorem (see e.g.~Br\`{e}maud (1981), Theorem T11 or Rogers \& Williams (2000), IV-21, Theorem 21.15).

\begin{proposition}[Martingale representation theorem]\label{mart-rep}  If $Y$ is a (right-continuous) square-integrable $\mathbb{F}^0$-martingale, there exists a  unique ($dP\times I_i(s-) \l_{ij}(s)ds$-almost everywhere) family of predictable processes $Z_{ij}, \, i\neq j,$ satisfying
\begin{equation}\label{Z-g-int}
E\left[\int_0^T \|Z(s)\|^2_{\Lambda}ds\right]<\infty
\end{equation}
such that
\begin{equation}\label{mart-rep-L}
Y(t)=Y(0)+\int_{(0,t]}Z(s)dM(s), \quad 0\le t\le T.
\end{equation}
\end{proposition}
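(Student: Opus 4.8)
The plan is to realize the stochastic integrals against $M$ as a stable subspace of the Hilbert space $\mathcal{H}^2_0$ of square-integrable $\Ff^0$-martingales vanishing at the origin, to read off the candidate integrand $Z$ directly from the jumps of $Y$, and then to show that nothing is left over. Since $\Ff^0$ is the completed natural filtration of the pure jump process $X$ and the compensators in \eqref{mart-0} are absolutely continuous, the filtration is quasi-left-continuous, so the martingale $Y$ can jump only at the jump times $T_1<T_2<\cdots$ of $X$; moreover, on $(T_n,T_{n+1}]$ any $\F^0_t$-measurable quantity is a measurable function of the observed history $(T_1,X(T_1),\dots,T_n,X(T_n))$ and of the current time. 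Using this explicit description I would define, on the predictable set $\{X(t-)=i\}$, a predictable process $Z_{ij}(t)$ equal to the value taken by $\Delta Y(t)=Y(t)-Y(t-)$ on the event $\{X(t-)=i,\,X(t)=j\}$; this is the natural candidate for the integrand.

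Next I would set $L:=Y-Y(0)-\int_{(0,\cdot]}Z\,dM$ and verify that $L$ is a continuous square-integrable martingale. Because the compensators are absolutely continuous, $\int Z\,dM$ inherits its jumps only from the counting part, so that $\Delta\big(\int Z\,dM\big)(t)=\sum_{i\neq j}Z_{ij}(t)\,\Delta N_{ij}(t)$; on a jump of $X$ from $i$ to $j$ this equals $Z_{ij}(t)=\Delta Y(t)$ by construction, whence $\Delta L\equiv 0$ and $L$ has continuous paths. The integrability \eqref{Z-g-int} of $Z$ follows from an energy estimate: the predictable compensator of $\sum_{T_n\le T}(\Delta Y(T_n))^2=\sum_{i\neq j}\int_{(0,T]}Z_{ij}^2\,dN_{ij}$ is exactly $\int_0^T\|Z(s)\|^2_\Lambda\,ds$, and taking expectations gives $E\big[\int_0^T\|Z(s)\|^2_\Lambda\,ds\big]=E\big[\sum_{T_n\le T}(\Delta Y(T_n))^2\big]\le E[[Y]_T]\le E[Y(T)^2]<\infty$. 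The Doob inequality quoted above then makes $\int Z\,dM$, and hence $L$, a genuine element of $\mathcal{H}^2$.

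The crux is to prove $L\equiv 0$, and I expect this to be the main obstacle carrying the real content of the statement. Here one must use that the natural filtration of a pure jump process with absolutely continuous compensator supports no non-constant continuous martingale, equivalently that every $\Ff^0$-martingale is purely discontinuous; granting this, the continuous martingale $L$ with $L(0)=0$ is identically zero, which yields \eqref{mart-rep-L}. I would establish the structural fact by the recursive argument behind Br\`emaud (1981), Theorem T11: condition successively on the jump data $(T_n,X(T_n))$ and use that on each interval $(T_n,T_{n+1}]$ the only new information carried by $\Ff^0$ is whether a further jump has yet occurred, a martingale already absorbed by the $\int I_i(s-)\l_{ij}(s)\,ds$ terms; propagating the representation jump by jump shows the $M_{ij}$ generate all of $\mathcal{H}^2_0$. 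Alternatively one argues in Hilbert space: any $L\in\mathcal{H}^2_0$ with $\la L,M_{ij}\ra=0$ for all $i\neq j$ has, by the section theorem and the intensity structure, vanishing jumps, hence is continuous and therefore constant, so the stable subspace generated by $M$ has trivial orthogonal complement.

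Finally, uniqueness is immediate from the isometry: if $\int_{(0,\cdot]}Z\,dM$ and $\int_{(0,\cdot]}Z'\,dM$ represent the same martingale, then $E\big[\int_0^T\|Z(s)-Z'(s)\|^2_\Lambda\,ds\big]=0$, which is precisely equality of $Z$ and $Z'$ up to $dP\times I_i(s-)\l_{ij}(s)\,ds$-null sets.
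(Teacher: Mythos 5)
The paper does not actually prove this proposition: it is quoted from Br\`emaud (1981), Theorem T11, and Rogers \& Williams (2000), IV-21, Theorem 21.15, so the comparison is really with those standard proofs and with the paper's closest in-house argument, the proof of Proposition \ref{mart-rep-explicit}. Your outline is correct and follows the classical $L^2$ route: read off a predictable version $Z$ of the jumps of $Y$, subtract $\int Z\,dM$, observe that the remainder $L$ is a continuous square-integrable martingale, kill $L$ using the fact that the natural filtration of a jump process with absolutely continuous compensators supports no nonconstant continuous martingale, and get uniqueness from the isometry. The paper's references (and its own Proposition \ref{mart-rep-explicit}, via Chou \& Meyer (1975) and Elliott (1976)) instead proceed constructively: prove the one-jump representation by direct computation with the conditional distribution of $(T_{n+1},X(T_{n+1}))$ given $\mathcal{F}^0_{T_n}$, then weld the pieces together along the inter-arrival times; this simultaneously produces the explicit formula for $Z$ that the paper needs later in Corollary \ref{Corollary_explicit-mart-repres} and Proposition \ref{bsde-1}. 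What your route buys is brevity and conceptual transparency; what it costs is that its two load-bearing steps are exactly the substance of the cited theorems: (i) that an $\mathbb{F}^0$-martingale can jump only at the jump times $T_n$ of $X$ and that its jump there has a predictable version (quasi-left-continuity only excludes jumps at predictable times; covering the totally inaccessible times by the $T_n$ is a separate structure theorem for point-process filtrations, as is the functional form of $\mathcal{F}^0_t$-measurable variables that you invoke to build $Z$), and (ii) that every martingale of this filtration is purely discontinuous. You correctly identify (ii) as the crux rather than assuming it, and the recursion you sketch to close it is in substance the same Chou--Meyer/Elliott jump-by-jump argument the paper uses for Proposition \ref{mart-rep-explicit}, so the proposal is not circular; but as written it remains a proof architecture whose hardest components are deferred to precisely the lemmas the quoted references establish in detail.
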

In fact the form of the process $Z$ can be made explicit as shown in the following Proposition.
\begin{proposition}[Explicit martingale representation]\label{mart-rep-explicit}
Let $\zeta$ be an integrable random variable. The unique right-continuous process  $Y$ defined by  $Y(t):= E[\zeta| \mathcal{F}^0_t]$, $t \geq 0$, satisfies
\eqref{mart-rep-L} for $Z$ defined as
\begin{align*}
  Z_{ij}(t) := \sum_{n=0}^{\infty}  \mathbb{1}_{\{T_n < t \leq T_{n+1}\}} \left( E [ \zeta | \mathcal{F}^0_{T_n}, T_{n+1}=t, X(T_{n+1})=j] - \frac{ E [ \zeta \mathbb{1}_{\{ T_{n+1}>t\}} | \mathcal{F}_{T_n}^0]}{E [  \mathbb{1}_{\{  T_{n+1}>t\}} | \mathcal{F}_{T_n}^0]} \right).
\end{align*}
\end{proposition}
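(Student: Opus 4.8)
The plan is to identify the integrand in the representation of Proposition~\ref{mart-rep} by reading off the jumps of the martingale $Y$. First I would recall the structure of the jump filtration: since $\mathbb{F}^0$ is the natural filtration of the pure jump process $X$ with deterministic start $X(0)=x_0$, one has for each $n$ that $\mathcal{F}^0_{T_{n+1}}=\sigma\big(\mathcal{F}^0_{T_n},\,T_{n+1},\,X(T_{n+1})\big)$, so the only information revealed during the sojourn $(T_n,T_{n+1})$ is the survival event $\{T_{n+1}>s\}$, whereas at the jump one additionally learns the new jump time and the destination state. Since the compensators in \eqref{mart-0} are absolutely continuous, $\Delta M_{ij}=\Delta N_{ij}$, and at a transition $i\to j$ at time $t$ exactly one counting process jumps, so that for any predictable integrand $\Delta\!\int_{(0,\cdot]}Z\,dM\,(t)=Z_{ij}(t)$. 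Applying Proposition~\ref{mart-rep} (after a truncation $\zeta_m=(\zeta\wedge m)\vee(-m)$ and passage to the limit to cover the merely integrable case, or by invoking the general point-process representation of Br\`{e}maud (1981)), $Y$ admits a representation with some predictable integrand $\widetilde Z$, and comparing jumps forces
$$\widetilde Z_{ij}(T_{n+1})=Y(T_{n+1})-Y(T_{n+1}-)\quad\text{on }\{X(T_{n+1}-)=i,\;X(T_{n+1})=j\}.$$

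Next I would compute the two one-sided limits. For the right value the tower property together with the filtration structure above gives directly
$$Y(T_{n+1})=E\big[\zeta\,\big|\,\mathcal{F}^0_{T_n},\,T_{n+1},\,X(T_{n+1})\big],$$
which on $\{T_{n+1}=t,\,X(T_{n+1})=j\}$ is the first term of the asserted formula. For the left limit, on the sojourn interval $\{T_n\le s<T_{n+1}\}$ the only new information in $\mathcal{F}^0_s$ relative to $\mathcal{F}^0_{T_n}$ is the survival $\{T_{n+1}>s\}$, whence
$$Y(s)=\frac{E\big[\zeta\,\mathbb{1}_{\{T_{n+1}>s\}}\,\big|\,\mathcal{F}^0_{T_n}\big]}{E\big[\mathbb{1}_{\{T_{n+1}>s\}}\,\big|\,\mathcal{F}^0_{T_n}\big]}\qquad\text{on }\{T_{n+1}>s\}.$$
Letting $s\uparrow t$ and using that under (A1) the conditional law of $T_{n+1}$ has a Lebesgue density and hence no atoms, so that $\{T_{n+1}\ge t\}$ and $\{T_{n+1}>t\}$ coincide up to a $P$-null set, yields the second term of the formula as $Y(t-)$. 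Subtracting the two limits identifies $\widetilde Z_{ij}$ with $Z_{ij}$ at the jump times.

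Finally I would verify that the right-hand side genuinely defines a predictable integrand, so that the identification at jump times — which is all that the $dP\times I_i(s-)\l_{ij}(s)\,ds$-a.e.\ uniqueness in Proposition~\ref{mart-rep} requires — upgrades to the representation \eqref{mart-rep-L}. The first term is the evaluation at the current time $t$ of a jointly measurable kernel $g(t,j,\cdot)$ realizing the regular conditional expectation $E[\zeta\,|\,\mathcal{F}^0_{T_n},T_{n+1}=\cdot,X(T_{n+1})=\cdot]$, and the second term is $\mathcal{F}^0_{T_n}$-measurable and left-continuous in $t$; multiplied by $\mathbb{1}_{\{T_n<t\le T_{n+1}\}}$ and summed over $n$, both are predictable. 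The integrability \eqref{Z-g-int} is then inherited from the square integrability of $Y$ through the isometry (again via truncation in the $L^1$ case). I expect the main obstacle to be the left-limit computation together with the measure-theoretic bookkeeping — choosing a jointly measurable regular version of the conditional expectation that renders the process predictable, and resolving the $\{T_{n+1}>t\}$ versus $\{T_{n+1}\ge t\}$ discrepancy — rather than the jump-matching step, which is routine once the filtration structure is in place.
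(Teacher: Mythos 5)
Your route is genuinely different from the paper's. The paper's proof consists essentially of two citations: the explicit single-jump representation of Chou \& Meyer (1975), which is proved directly for integrable $\zeta$ and already delivers the explicit integrand, followed by Elliott's (1976) construction to concatenate that result over the inter-arrival times $S_{n+1}=T_{n+1}-T_n$. You instead start from the abstract representation of Proposition \ref{mart-rep} and try to pin down the integrand by matching jumps. Your computations of the two one-sided values are correct: $Y(T_{n+1})=E[\zeta\,|\,\mathcal{F}^0_{T_n},T_{n+1},X(T_{n+1})]$ and the sojourn formula for $Y(t-)$ are exactly the two terms of the asserted $Z_{ij}$. But the step you dismiss parenthetically is the crux of your approach, and your justification of it is backwards.

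You claim that identification at jump times ``is all that the $dP\times I_i(s-)\l_{ij}(s)\,ds$-a.e.\ uniqueness in Proposition \ref{mart-rep} requires.'' In fact, for each $\omega$ the jump times form a countable, hence Lebesgue-null, set, so the union of graphs $\bigcup_n[T_n]$ is a \emph{null} set for the measure $dP\times I_i(s-)\l_{ij}(s)\,ds$. Agreement of $Z$ and $\widetilde Z$ on that set therefore determines nothing, by itself, about their equivalence classes modulo this measure: it fixes the integrals against $dN$, but not, on the face of it, the compensator parts $\int \cdot\, I_i(s-)\l_{ij}(s)\,ds$, and hence not the integrals against $dM$. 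The a.e.\ uniqueness in Proposition \ref{mart-rep} cannot be invoked in the direction you want, because that measure does not charge jump times at all. What closes the gap is predictability together with the compensator property: setting $D:=Z-\widetilde Z$, which is predictable and vanishes at every jump time, one has $\int_{(0,T]}|D_{ij}|\,dN_{ij}=0$ a.s., and since $I_i(s-)\l_{ij}(s)\,ds$ is the dual predictable projection of $dN_{ij}$,
\begin{equation*}
E\left[\int_0^T |D_{ij}(s)|\,I_i(s-)\l_{ij}(s)\,ds\right]=E\left[\int_{(0,T]} |D_{ij}(s)|\,dN_{ij}(s)\right]=0 ,
\end{equation*}
so $D=0$ a.e.\ with respect to $dP\times I_i(s-)\l_{ij}(s)\,ds$, and only then $\int Z\,dM=\int\widetilde Z\,dM=Y-Y(0)$. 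This lemma must be stated and proved; it --- not the left-limit computation or the measurability bookkeeping --- is where your proof currently has a hole, and it is also precisely where the predictability of your explicit $Z$ is indispensable rather than a formality. A secondary weakness: Proposition \ref{mart-rep} assumes square integrability while $\zeta$ is only integrable; your truncation-and-limit fix is not routine (it amounts to reproving an $L^1$ representation theorem), and the claim that \eqref{Z-g-int} is inherited ``through the isometry'' is false in general for merely integrable $\zeta$. If you keep your route, cite an $L^1$ point-process representation theorem outright and work with local integrability --- which is, in effect, what the paper gains for free from Chou \& Meyer.
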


\begin{proof}
First of all, suppose that $X$ has at most one jump. Then, according to Chou \& Meyer (1975) we have
\begin{align*}
 Y(t) = Y(0) +\sum_{j: j\neq x_0} \int_{(0,t]}  \left( E [ \zeta | \mathcal{F}_{0}^0, T_{1}=s, X(T_{1})=j] - \frac{ E [ \zeta \mathbb{1}_{\{ s < T_{1}\}} | \mathcal{F}_{0}^0]}{E [  \mathbb{1}_{\{ s < T_{1}\}} | \mathcal{F}_{0}^0]} \right)d M_{x_0j}(s)
\end{align*}
almost surely for each $t >0$.
The statement remains true for any enlargement of the initial information $\mathcal{F}_{0}^0$, see Chou \& Meyer (1975).  Following the construction in Elliott (1976),  by applying the single jump result on the inter-arrival times $S_{n+1}:=T_{n+1}-T_{n}$, we can show that
\begin{align*}
  Y(T_n+t)- Y(T_{n})
  &=   \sum_{i,j:i\neq j}\int_{(T_{n},T_n+t]}  \Big( E [ \zeta | \mathcal{F}_{T_{n}}^0, S_{n+1}=  s-T_n, X(T_n+ S_{n+1})=j] \\
  & \qquad  \qquad  \qquad \qquad \qquad \qquad  -  \frac{ E [ \zeta \mathbb{1}_{\{ T_{n+1}>s\}} | \mathcal{F}_{T_n}^0]}{E [  \mathbb{1}_{\{  T_{n+1}>s\}} | \mathcal{F}_{T_n}^0]}   \Big)d M_{ij}(s)
\end{align*}
for $T_{n}<t\leq T_{n}+S_{n+1}$, $n \in \mathbb{N}_0$.
\end{proof}

\begin{corollary}\label{Corollary_explicit-mart-repres}
Let $(\zeta(t))_{t \geq 0}$ be an integrable c\`{a}dl\`{a}g process such that $\zeta(t)-\zeta(0)$ is $\mathcal{F}^0_t$-measurable for each $t \geq 0$. Then the unique right-continuous process  $Y$ defined by  $Y(t):= E[\zeta(t)| \mathcal{F}^0_t]$, $t \geq 0$, satisfies
\begin{equation}\label{mart-rep-dynamic}
Y(t)= \zeta(t)-\zeta(0) +\int_{(0,t]}Z(s)dM(s),\quad  t\ge 0,
\end{equation}
for $Z$ defined as
\begin{align*}
  Z_{ij}(t):= \sum_{n=0}^{\infty}  \mathbb{1}_{\{T_n < t \leq T_{n+1}\}} &\Big( E [ \zeta(t-) | \mathcal{F}^0_{T_n}, T_{n+1}=t, X(T_{n+1})=j] \\
  & \qquad\qquad - \frac{ E [ \zeta(t-) \mathbb{1}_{\{ T_{n+1}>t\}} | \mathcal{F}_{T_n}^0]}{E [  \mathbb{1}_{\{  T_{n+1}>t\}} | \mathcal{F}_{T_n}^0]}  \Big).
\end{align*}
\end{corollary}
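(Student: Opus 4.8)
\emph{Proof proposal.} The plan is to reduce the statement to the fixed--random--variable case already settled in Proposition \ref{mart-rep-explicit}. Using the measurability assumption I would first write
$$Y(t)=E[\zeta(t)\mid\mathcal{F}^0_t]=\big(\zeta(t)-\zeta(0)\big)+E[\zeta(0)\mid\mathcal{F}^0_t],$$
so that only the martingale $t\mapsto E[\zeta(0)\mid\mathcal{F}^0_t]$ remains to be represented. Since $\zeta(0)$ is an integrable random variable, Proposition \ref{mart-rep-explicit} applies to it directly and represents this martingale by $\int_{(0,t]}\widetilde Z\,dM$ plus its initial value $E[\zeta(0)\mid\mathcal{F}^0_0]=E[\zeta(0)]$, where $\widetilde Z$ is given by the formula for $Z$ with $\zeta(t-)$ replaced by $\zeta(0)$. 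Thus \eqref{mart-rep-dynamic} will follow (up to the constant initial value, which contributes $Y(0)$) once I verify that $\widetilde Z=Z$, i.e.\ that the time dependence of $\zeta$ does not alter the integrand.

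The crux is therefore the identity $\widetilde Z_{ij}=Z_{ij}$, which I would establish separately on each set $\{T_n<t\le T_{n+1}\}$. On this event the trajectory of $X$ on $[0,t)$ is frozen at $X(T_n)$ after time $T_n$, hence the whole path on $[0,t)$ is determined by $\mathcal{F}^0_{T_n}$ together with the event $\{T_{n+1}\ge t\}$. Because $\zeta(t-)-\zeta(0)$ is a left limit of $\mathcal{F}^0_s$--measurable increments, it is $\mathcal{F}^0_{t-}$--measurable, and the freezing shows that on $\{T_n<t\le T_{n+1}\}$ it agrees with some $\mathcal{F}^0_{T_n}$--measurable random variable $\phi_{n,t}$; that is, $\zeta(t-)=\zeta(0)+\phi_{n,t}$ there. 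The two conditioning events occurring in $Z_{ij}(t)$, namely $\{T_{n+1}=t\}$ and $\{T_{n+1}>t\}$, both lie in $\{T_{n+1}\ge t\}$, so this substitution is legitimate in both terms. Since $\phi_{n,t}$ is $\mathcal{F}^0_{T_n}$--measurable it factors out of each conditional expectation: the first term becomes $E[\zeta(0)\mid\mathcal{F}^0_{T_n},T_{n+1}=t,X(T_{n+1})=j]+\phi_{n,t}$ and the second becomes $\frac{E[\zeta(0)\mathbb{1}_{\{T_{n+1}>t\}}\mid\mathcal{F}^0_{T_n}]}{E[\mathbb{1}_{\{T_{n+1}>t\}}\mid\mathcal{F}^0_{T_n}]}+\phi_{n,t}$. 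The two copies of $\phi_{n,t}$ cancel in the difference, leaving exactly $\widetilde Z_{ij}(t)$; summing over $n$ yields $Z=\widetilde Z$.

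The step I expect to be delicate is this freezing argument: showing rigorously that $\zeta(t-)-\zeta(0)$ collapses to an $\mathcal{F}^0_{T_n}$--measurable random variable on $\{T_n<t\le T_{n+1}\}$. This rests on the fact that, restricted to this event, $\mathcal{F}^0_{t-}$ is generated by $\mathcal{F}^0_{T_n}$ and the event itself, combined with the left continuity used to pass from the $\mathcal{F}^0_s$--measurable values $\zeta(s)-\zeta(0)$, $s<t$, to $\zeta(t-)-\zeta(0)$; one must also keep track that the substitution is applied only on $\{T_{n+1}\ge t\}$, where it is valid. Once this measurability reduction is in place, pulling $\phi_{n,t}$ out of the conditional expectations and carrying out the cancellation are routine, and the conclusion follows by inserting $\widetilde Z=Z$ into the representation of $E[\zeta(0)\mid\mathcal{F}^0_t]$ supplied by Proposition \ref{mart-rep-explicit} and recombining with the split of $Y$.
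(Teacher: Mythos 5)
Your proposal is correct and takes essentially the same route as the paper: the same decomposition $Y(t)=(\zeta(t)-\zeta(0))+E[\zeta(0)\,|\,\mathcal{F}^0_t]$, the same application of Proposition \ref{mart-rep-explicit} to the martingale $E[\zeta(0)\,|\,\mathcal{F}^0_t]$, and the same key identity between the integrand written with $\zeta(0)$ and the one written with $\zeta(t-)$, which the paper justifies in a single line (``since $\zeta(t-)-\zeta(0)$ is $\mathcal{F}^0_{t-}$-measurable'') and which your freezing/cancellation argument on $\{T_n<t\leq T_{n+1}\}$ merely spells out in detail. Your parenthetical about the additive constant $Y(0)=E[\zeta(0)]$ is also accurate: the paper's own proof produces that constant as well, and it is harmlessly dropped in the stated equation \eqref{mart-rep-dynamic} (it disappears upon differentiation in the subsequent application).
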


\begin{proof}
By applying Proposition \ref{mart-rep-explicit} on the martingale
\begin{align*}
  E[ \zeta(t) | \mathcal{F}^0_t ]  - (\zeta(t) - \zeta(0)) = E[ \zeta(0) | \mathcal{F}^0_t ]
\end{align*}
and using that
\begin{align*}
 &\sum_{n=0}^{\infty}  \mathbb{1}_{\{T_n < t \leq T_{n+1}\}}\left( E [\zeta(0) | \mathcal{F}^0_{T_n}, T_{n+1}=t, X(T_{n+1})=j] -\frac{ E [ \zeta(0) \mathbb{1}_{\{ T_{n+1}>t\}} | \mathcal{F}_{T_n}^0]}{E [  \mathbb{1}_{\{  T_{n+1}>t\}} | \mathcal{F}_{T_n}^0]} \right) \\
 &=   \sum_{n=0}^{\infty}  \mathbb{1}_{\{T_n < t \leq T_{n+1}\}}\left(E [ \zeta(t-) | \mathcal{F}^0_{T_n}, T_{n+1}=t, X(T_{n+1})=j] - \frac{ E [ \zeta(t-) \mathbb{1}_{\{ T_{n+1}>t\}} | \mathcal{F}_{T_n}^0]}{E [  \mathbb{1}_{\{  T_{n+1}>t\}} | \mathcal{F}_{T_n}^0]} \right)
\end{align*}
almost surely for each $t>0$ since $\zeta(t-)-\zeta(0)$ is $\mathcal{F}^0_{t-}$-measurable, we end up with equation \eqref{mart-rep-dynamic}.
\end{proof}

\subsection{Life insurance cash flow}\label{Section_Cashflow}
A standard life insurance payment process $A(t)$ of accumulated contractual benefits less premiums payable during the time interval $[0, t]$ is of the form
$$
dA(t)=\sum_iI_i(t-) \Big( \alpha_i(t)\,dt+ a_i(t)\, d \nu(t) \Big)+\sum_{i,j : i\ne j} \b_{ij}(t)\,dN_{ij}(t),\quad A(0-)=0,
$$
where $\alpha_i$  is a predictable process specifying continuous sojourn payments in state $i$, $a_i$ is a predictable process specifying lump sum sojourn payments in state $i$, and $\nu$ is a deterministic  step function with step heights of $+1$ and at most a finite number of steps on compact intervals. 
 Furthermore, $\b_{ij}$ is a predictable process that specifies a transition payment due immediately upon a transition from state $i$ to state $j$ at time $t$.  In the following we will also use the short notation
 \begin{align*}
\alpha_{X(t-)}(t) &=  \sum_iI_i(t-)\alpha_i(t),\\
a_{X(t-)}(t) &=  \sum_iI_i(t-)a_i(t).
\end{align*}
We generally assume that there is a finite maximum contract time $T < \infty$, i.e.~ $\alpha_i(t)=0$, $a_i(t)=0$ and $\b_{ij}(t)=0$ for all $t >T$ and  $i,j \in \mathcal{S}$, $i \neq j$.

\subsection{Life insurance model for given jump intensities}

Instead of starting from a probability space with probability measure $P$ and then identifying the transition intensities $(\l_{ij})_{ij}$, practitioners usually prefer the reversed approach that starts from a given set of transition intensities and then identifies the corresponding probability measure. While it is a well known fact in the actuarial literature that the transition intensities and the starting value of  a Markov process $X$ uniquely determine the whole distribution of $X$, we want to point out here that the reversed approach still works for non-Markovian models. 
\begin{proposition}[Existence of jump process distributions for given transition intensities]\label{ModelFromCompensators}
Given a sample space $\Omega$ and a jump process $X: [0,\infty) \times \Omega \rightarrow \mathcal{S}$, suppose that  $f_{ij}: [0,\infty) \times \Omega  \rightarrow [0,\infty)$, $i,j \in \mathcal{S}, i\neq j$,  are predictable processes with respect to the (non-completed) natural filtration of $X$, and for each $\omega \in \Omega$ let 
\begin{align*}
    \int_0^T\sum_{i,j: i\neq j}f_{ij}(t,\omega)dt  < \infty.
\end{align*}
Then there exists a unique probability measure $P$ such that the transition intensities $(\l_{ij})_{ij}$ of $X$ satisfy 
$$ I_i(t-)\l_{ij}(t) = I_i(t-) f_{ij}(t,\cdot), \quad t \geq 0,\, i,j \in \mathcal{S},\, i\neq j.$$
\end{proposition}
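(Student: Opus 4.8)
The plan is to construct $P$ directly as the law of a marked point process, by specifying recursively the conditional distributions of the successive jump times and post-jump states, and then to verify that the resulting compensators match the prescribed densities. The starting observation is that, because $X$ is a pure jump process, any process predictable with respect to its natural filtration is, on each stochastic interval $(T_n, T_{n+1}]$, a deterministic function of the elapsed time together with the history $\mathcal{H}_n := (T_1, X(T_1), \ldots, T_n, X(T_n))$; in particular $X(t-) = X(T_n)$ is constant there, so with $i = X(T_n)$ the maps $t \mapsto f_{ij}(t,\omega)$ reduce to measurable functions $f^{(n)}_{ij}(t;\mathcal{H}_n)$.

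First I would use these functions to define one-step conditional kernels. Writing $\lambda^{(n)}_i(t) := \sum_{j \neq i} f^{(n)}_{ij}(t;\mathcal{H}_n)$ for the total exit rate, I declare the conditional survival function of the next jump time given $\mathcal{F}^0_{T_n}$ to be $\exp\big(-\int_{T_n}^t \lambda^{(n)}_i(s)\,ds\big)$ and, conditionally on a jump occurring at time $t$, the post-jump state to equal $j$ with probability $f^{(n)}_{ij}(t)/\lambda^{(n)}_i(t)$. The hypothesis $\int_0^T \sum_{i,j:i\neq j} f_{ij}\,dt < \infty$ ensures that this survival function stays strictly positive on $[0,T]$, so the kernels are genuine (sub-)probability kernels and, crucially, the process is non-explosive, i.e.\ almost surely only finitely many jumps occur on $[0,T]$. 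The Ionescu--Tulcea theorem then assembles these kernels into a unique probability measure $P$ on the path space.

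Next I would verify that under $P$ the compensator of $N_{ij}$ has the claimed Lebesgue density $I_i(t-) f_{ij}(t)$. The cleanest route is to check on each interval $(T_n, T_{n+1}]$ that $N_{ij}(t) - \int_0^t I_i(s-) f_{ij}(s)\,ds$ is a $P$-martingale; this amounts to differentiating the conditional law of $(T_{n+1}, X(T_{n+1}))$ constructed above and recognizing $f^{(n)}_{ij}$ as the resulting hazard, with the factor $I_i(t-)$ appearing automatically because $X(t-) \equiv i$ on this interval. Uniqueness follows by reversing the computation: any measure under which the intensities equal $I_i(t-)f_{ij}(t)$ must produce the conditional hazard $\lambda^{(n)}_i$ and the conditional mark law $f^{(n)}_{ij}/\lambda^{(n)}_i$, hence the same successive jump-time and mark kernels, hence the same finite-dimensional distributions by Ionescu--Tulcea.

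The step I expect to be the main obstacle is the verification: translating faithfully between the sequential-kernel description of $P$ and the predictable-compensator description of the intensity, while keeping track of the predictability of $f_{ij}$ and the measurability of the histories $\mathcal{H}_n$. The non-explosion argument and the strict positivity of the survival functions are precisely where the integrability hypothesis is used, and some care is needed at the terminal time $T$, where the conditional law of the next jump carries an atom corresponding to no further jump occurring before $T$.
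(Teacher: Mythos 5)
Your construction is, in substance, a hand-made reproof of the two results that the paper's proof simply cites: the paper invokes Theorems 8.2.1 (uniqueness) and 8.2.3 (existence) of Last \& Brandt (1995), and the proofs of those theorems are exactly your argument --- the representation of processes that are predictable for the non-completed natural filtration as measurable functions of $t$ and the history $\mathcal{H}_n$ on each $(T_n,T_{n+1}]$, the correspondence between compensators with Lebesgue densities and the successive conditional laws of $(T_{n+1},X(T_{n+1}))$, and an Ionescu--Tulcea assembly. So the route is the paper's route, unpacked rather than cited. One step of the paper you omit entirely: the transition intensities in the paper's sense are compensator densities for the \emph{completed} filtration, so after constructing $P$ one must check that compensators survive completion of the measure, the $\sigma$-algebra and the filtration. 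This is routine, but it is literally half of the paper's written proof.

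The genuine gap is your non-explosion claim. Strict positivity of each one-step survival function $\exp\big(-\int_{T_n}^t \lambda^{(n)}_i(s)\,ds\big)$ says only that no single step forces a jump; it does not prevent the jump times generated by your kernels from accumulating inside $[0,T]$, and the pathwise integrability hypothesis cannot rule this out, because it constrains $f_{ij}$ only along paths of $\Omega$ --- which are non-explosive by definition of $X$ --- whereas explosion is about the kernels along histories with unboundedly many jumps. Concretely, take $\mathcal{S}=\{0,1\}$ and $f_{01}(t,\omega)=f_{10}(t,\omega)=2^{N(t-,\omega)}$, where $N(t-)$ counts past jumps: this is predictable, and $\int_0^T\sum_{i\neq j}f_{ij}(t,\omega)\,dt<\infty$ for every $\omega\in\Omega$ since each path has finitely many jumps on $[0,T]$; yet your kernels make the inter-arrival times $S_{n+1}$ conditionally exponential with rate $2^n$, so $\sum_n S_n\leq T$ with positive probability, and the Ionescu--Tulcea measure charges sequences with infinitely many jumps before $T$, which correspond to no point of $\Omega$. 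Worse, your own uniqueness argument (compensator forces the conditional kernels) shows that \emph{any} measure on $(\Omega,\sigma(X(s):s\geq 0))$ with these intensities would have to induce exactly these kernels, so in this example no such measure exists; the pathwise hypothesis is too weak, and, to be fair, the paper's citation of Last \& Brandt inherits the same delicate point, since their existence theorem lives on a canonical space that tolerates accumulating jump times. To close the gap you need a genuinely stronger assumption --- for instance $\sup_{\omega}\int_0^T\sum_{i\neq j}f_{ij}(t,\omega)\,dt<\infty$, which bounds the total integrated hazard of every realization of your construction and makes accumulation of jumps a null event --- or some separate argument that the constructed law is carried by non-explosive sequences.
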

\begin{proof}
By applying Theorem 8.2.1 and Theorem 8.2.3 from Last \& Brandt (1995), we get a unique probability measure $P$ on $\sigma(X(s):s\geq 0)$  such that  the compensators of  the counting processes $N_{ij}$ have Lebesgue-densities $I_i(t-)f_{ij}(t,\cdot) $ w.r.t.~the non-completed natural filtration of $X$. By completing the measure $P$, the sigma-algebra $\sigma(X(s):s\geq 0)$ and the natural filtration of $X$, we obtain that the processes $I_i(t-) f_{ij}(t,\cdot) dt $ are still the compensators of $d N_{ij}(t) $ under the $P$-completed natural filtration of $X$.
\end{proof}

\section{Prospective reserves}\label{SectionProspectiveReserves}

\medskip Following Norberg (1991, 1992), we recall the conditional expectation formulation of the prospective reserve for the above life insurance policy, given the compensators $\Lambda=(\l_{ij})_{ij}$ and a discount rate $\delta$. We  assume that $\delta$ is a bounded and  progressively measurable process.

\begin{definition}
The prospective reserve associated with the payment process $A$, the matrix $\Lambda$ and discount rate $\delta$ is
\begin{equation}\label{reserve-1}
Y(t):=E\Big[\int_{(t,T]} e^{-\int_t^s \delta(u)du}\,dA(s)\Big|\F^0_t\Big],
\end{equation}
where the pair $(\Lambda,\delta)$ is called the basis of the prospective reserve.
\end{definition}
The first two main results of this paper displayed below are Propositions \ref{bsde-1} and \ref{Nonlinear-NonMarkovian-StochThiele}. In Proposition \ref{bsde-1} we give a BSDE formulation of the prospective reserve $Y$ when the involved payment processes $\alpha_i,a_i$ and $\b_{ij}$ are mappings on $[0,\infty) \times \Omega$. 
We call it {\it the linear prospective reserve}. In Proposition \ref{Nonlinear-NonMarkovian-StochThiele} we study the case where the payment processes $\alpha_i(t)$, $a_i(t)$ and $\beta_{ij}(t)$ may additionally depend on the current reserve $Y(t-)$ and the process $Z(t)$, in which case the definition of $Y$ given by \eqref{reserve-1} is circular, and it is not clear if $Y$ is still well-defined, i.e.~it is unclear whether $Y$ exists and whether it is unique. Under mild conditions on the payment processes and the jump intensities of the state process $X$, we show  in Proposition \ref{Nonlinear-NonMarkovian-StochThiele} that the prospective reserve $Y$, coined {\it the nonlinear prospective reserve}, exists and is unique.

\subsection{Linear reserving}\label{Subsection-LinearReserving-Classical}
By linear reserving we mean the case where the payment processes $\alpha_i$, $a_i$ and $\beta_{ij}$ are mappings on $[0,\infty)  \times \Omega$ which do not dependent on the prospective reserve. 
We assume that
\begin{enumerate}
\item [(A2)] the payment processes $(\alpha_i)_i$, $(a_i)_i$ and $\beta=(\beta_{ij})_{ij}$  satisfy
\begin{equation*}
E\left[\int_0^T \left(|\alpha_{X(t-)}(t)|^{2}+\|\b(t)\|^2_{\Lambda}\right)dt  + \int_{(0,T]} |a_{X(t-)}(t)|^{2} d\nu(t) \right]<\infty.
\end{equation*}
\end{enumerate}
Noting that, in view of (A2), the process defined by
\begin{equation*}
d\widetilde M_i(t):=\sum_{j:j\neq i} \b_{ij}(t)\,dM_{ij}(t)
\end{equation*}
is  a square integrable $\mathbb{F}^0$-martingale, the payment process $A$ can be written as
\begin{equation}\label{CompensatorOfA}
dA(t)=\gamma_{X(t-)}(t)\, d t + a_{X(t-)}(t) \,d \nu(t) +\sum_i d\widetilde{M}_{i}(t)
\end{equation}
with
$$
\gamma_i(t):=\alpha_{i}(t)+ \sum_{j:j\neq i} \beta_{ij}(t) I_i(t-)\lambda_{ij}(t),\quad i\in\cS.
$$
%
Using the martingale property of the $\widetilde{M}_i$'s, the prospective reserve \eqref{reserve-1} of the life insurance contract becomes
\begin{equation}\label{eqn:Norberg-Thiele1}
Y(t)=E\Big[\int_{(t,T]} e^{-\int_t^s \delta(u)du}\Big(\gamma_{X(s-)}(s)\,ds +a_{X(s-)}(s)\, d\nu(s)\Big) \Big|\mathcal{F}^0_t\Big],\quad 0\le t\le T,
\end{equation}
which may be written as
$$
e^{-\int_0^t \delta(u)du}Y(t)+\int_{(0,t]} e^{-\int_0^s \delta(u)du}\Big(\gamma_{X(s-)}(s)\,ds + a_{X(s-)}(s)\, d\nu(s)\Big)=\widehat{M}(t),
$$
where $\widehat{M}$ is the square integrable martingale defined by
$$
\widehat{M}(t)=E\Big[\int_{(0,T]} e^{-\int_0^s \delta(u)du}\Big(\gamma_{X(s-)}(s)\,ds +a_{X(s-)}(s)\, d\nu(s)\Big) \Big|\mathcal{F}^0_t\Big].
$$
By the Martingale Representation Theorem, there exists a  unique ($dP\times I_i(s-) \l_{ij}(s)ds$-a.e.) family of predictable processes $Z_{ij}, \, i\neq j,$ satisfying
\begin{equation*}
E\left[\int_0^T \|Z(s)\|^2_{\Lambda}ds\right]<\infty,
\end{equation*}
such that
$$
d\widehat{M}(t)=Z(t)dM(t).
$$
This fact leads to the following BSDE representation of the prospective reserve.
\begin{proposition}[ Backward SDE formulation of the prospective reserve]\label{bsde-1}
The prospective reserve $Y$, given in \eqref{reserve-1}, associated with the payment process $A$, the matrix $\Lambda$ and discount rate $\delta$, satisfies the BSDE
\begin{equation}\label{bsde-1-Y}
dY(t)=\big(-\delta(t)Y(t) + \gamma_{X(t-)} (t)\big)  dt +a_{X(t-)}(t) \, d\nu(t) +Z(t)dM(t), \quad Y(T)=0,
\end{equation}
for a predictable process $Z=(Z_{ij})_{ij}$ such that
\begin{align}\label{expl-form-Z}
  I_i(t-) Z_{ij}(t) = I_i(t-) \bigg(\beta_{ij}(t) +E[ Y(t) |\mathcal{F}^0_{t-}, X(t)=j]- E[ Y(t) |\mathcal{F}^0_{t-}, X(t)=i] \bigg)
\end{align}
almost surely for each $t \geq 0$, $i,j \in \mathcal{S}$, $i\neq j$.
\end{proposition}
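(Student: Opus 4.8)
The plan is to verify the BSDE \eqref{bsde-1-Y} and the explicit form \eqref{expl-form-Z} by differentiating the martingale relation already established before the statement. The key observation is that the excerpt has already rewritten the discounted reserve plus accumulated payments as the square-integrable martingale $\widehat{M}$, and has already invoked the Martingale Representation Theorem to write $d\widehat{M}(t) = Z(t)\,dM(t)$ for a unique predictable $Z$ satisfying the integrability condition. So the first step is simply to start from the defining identity
\begin{equation*}
e^{-\int_0^t \delta(u)du}Y(t)+\int_{(0,t]} e^{-\int_0^s \delta(u)du}\Big(\gamma_{X(s-)}(s)\,ds + a_{X(s-)}(s)\, d\nu(s)\Big)=\widehat{M}(t)
\end{equation*}
and apply the product rule / integration by parts to the left-hand side, treating $e^{-\int_0^t \delta(u)du}$ as an absolutely continuous process of finite variation (here $\delta$ bounded and progressively measurable is used). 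Solving for $dY(t)$ after substituting $d\widehat{M}(t)=Z(t)\,dM(t)$ and cancelling the discount factor yields exactly \eqref{bsde-1-Y}, with terminal condition $Y(T)=0$ coming directly from \eqref{reserve-1}. This first part is essentially bookkeeping on the already-derived semimartingale decomposition.

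The substantive step is establishing the explicit form \eqref{expl-form-Z} of the integrand. First I would note that $Y(t)=E[\zeta(t)\mid \mathcal{F}^0_t]$ with $\zeta(t):=\int_{(0,T]}e^{-\int_0^s\delta\,du}dA(s)$, so that Corollary \ref{Corollary_explicit-mart-repres} applies directly and gives an explicit representation of the integrand of the martingale part in terms of conditional expectations of the form $E[\zeta(t-)\mid \mathcal{F}^0_{T_n}, T_{n+1}=t, X(T_{n+1})=j]$. The task is then to rewrite those conditional-expectation differences in the compact form appearing in \eqref{expl-form-Z}, namely as $\beta_{ij}(t)$ plus the jump in $E[Y(t)\mid \mathcal{F}^0_{t-}, X(t)=\cdot]$ across the transition $i\to j$. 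The point is that the payment process $A$ contributes a lump sum $\beta_{ij}(t)$ precisely at a transition from $i$ to $j$ at time $t$; conditioning on $\{X(t)=j\}$ versus the predictable "no-jump" behaviour captures this jump, which accounts for the $\beta_{ij}(t)$ term, while the remaining difference is exactly the jump in the post-transition reserve. I would make this precise by decomposing $\zeta(t-)$ into the part accumulated strictly before $t$ (which is $\mathcal{F}^0_{t-}$-measurable and cancels in the difference) and the future part, and by identifying the conditional expectation given $\{T_{n+1}=t, X(T_{n+1})=j\}$ with $E[Y(t)\mid \mathcal{F}^0_{t-}, X(t)=j]$ after adding back the transition payment $\beta_{ij}(t)$.

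The main obstacle I anticipate is the careful handling of the conditioning events and the factor $I_i(t-)$ in \eqref{expl-form-Z}. The explicit integrand from the corollary involves conditioning on $\mathcal{F}^0_{T_n}$ together with $\{T_{n+1}=t, X(T_{n+1})=j\}$, and one must argue that, on the event $\{X(t-)=i\}$, this is the correct object and that the second (predictable) term in the corollary's formula coincides with $E[Y(t)\mid \mathcal{F}^0_{t-}, X(t)=i]$; this is why the identity is only asserted after multiplication by $I_i(t-)$ and only $dP\times I_i(s-)\lambda_{ij}(s)\,ds$-almost everywhere. I would need to verify the measurability claim required by Corollary \ref{Corollary_explicit-mart-repres}, that $\zeta(t)-\zeta(0)$ is $\mathcal{F}^0_t$-measurable, which follows from predictability of $\alpha_i,a_i,\beta_{ij}$ and adaptedness of $X$ under (A2). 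The remaining algebra relating the two conditional expectations to the jump of $E[Y(t)\mid\mathcal{F}^0_{t-},X(t)=\cdot]$ is routine once the right event decomposition is in place, so I expect the conceptual content to be concentrated entirely in matching the corollary's integrand to the reserve's jump.
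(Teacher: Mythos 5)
Your first step is fine and coincides with the paper's closing step: integration by parts applied to the identity $e^{-\int_0^t\delta(u)du}Y(t)+\int_{(0,t]}e^{-\int_0^s\delta(u)du}\big(\gamma_{X(s-)}(s)\,ds+a_{X(s-)}(s)\,d\nu(s)\big)=\widehat{M}(t)$ together with $d\widehat{M}=Z\,dM$ yields the BSDE (up to drift signs, a bookkeeping point on which the paper's own proof is equally loose). The genuine gap is in your second step: the process you compute there is \emph{not} the $Z$ of your first step. In step one, $Z$ is the representation integrand of $\widehat{M}$, the martingale built from the \emph{compensated} cash flow $\gamma_{X(t-)}(t)\,dt+a_{X(t-)}(t)\,d\nu(t)$ of \eqref{CompensatorOfA}, in which the transition payments have already been shifted into the drift $\gamma$. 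In step two you apply Corollary \ref{Corollary_explicit-mart-repres} to a $\zeta$ built from the \emph{raw} cash flow $dA$, i.e.\ you compute the integrand of the different martingale $\widehat{M}^A(t)=e^{-\int_0^t\delta(u)du}Y(t)+\int_{(0,t]}e^{-\int_0^s\delta(u)du}dA(s)$. Since $\beta_{ij}(t)\,dN_{ij}(t)=\beta_{ij}(t)\,dM_{ij}(t)+\beta_{ij}(t)I_i(t-)\lambda_{ij}(t)\,dt$, these two integrands differ by exactly $e^{-\int_0^t\delta(u)du}\beta_{ij}(t)$: the raw-cash-flow integrand is the sum at risk (with the $\beta_{ij}$ term), the compensated one is the plain reserve difference (without it). Your proposal silently identifies the two, which is false whenever $\beta_{ij}\neq 0$. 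Concretely, in \eqref{bsde-1-Y} only the term $Z\,dM$ can produce jumps of $Y$, so at a transition $i\to j$ necessarily $Z_{ij}(t)=\Delta Y(t)=E[Y(t)\,|\,\mathcal{F}^0_{t-},X(t)=j]-E[Y(t)\,|\,\mathcal{F}^0_{t-},X(t)=i]$, with no room for an extra $\beta_{ij}(t)$; e.g.\ with two states, a single transition payment $\beta$, constant intensity $\lambda$, $\delta=0$ and no other payments, $\Delta Y(T_1)=-\beta\big(1-e^{-\lambda(T-T_1)}\big)$ while the sum at risk equals $\beta e^{-\lambda(T-T_1)}$.

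This mismatch is exactly what the paper's proof is structured to avoid: it applies Corollary \ref{Corollary_explicit-mart-repres} once, to the compensated process $\zeta(t)=\int_{(t,T]}e^{-\int_0^s\delta(u)du}\big(\gamma_{X(s-)}(s)\,ds+a_{X(s-)}(s)\,d\nu(s)\big)$, so a single $Z$ runs through the whole argument; moreover, for that $\zeta$ the jump $\zeta(t)-\zeta(t-)=a_{X(t-)}(t)\Delta\nu(t)$ is $\mathcal{F}^0_{t-}$-measurable, which is what legitimizes replacing $\zeta(t-)$ by $\zeta(t)$ in the conditional expectations. For your raw-cash-flow $\zeta$ the jump contains $\beta_{ij}(t)\Delta N_{ij}(t)$ and is not predictable, so that replacement fails and the $\beta$-bookkeeping you sketch must be done by hand --- and when done correctly it produces a representation for the integrand of $\widehat{M}^A$, not for the $Z$ solving \eqref{bsde-1-Y}. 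Note that the formula the paper's proof actually derives carries no $\beta_{ij}$ term, and it is this $\beta$-free version that is used afterwards (the Markov and semi-Markov examples, equation \eqref{Z0Z1Def}, and the sum at risk $\bar{\beta}_{kl}+Z^1_{kl}$ in Theorem \ref{CantelliTheorem}, which would double count if $Z^1_{kl}$ already contained $\bar{\beta}_{kl}$); the ``$\beta_{ij}(t)+$'' displayed in \eqref{expl-form-Z} is inconsistent with \eqref{bsde-1-Y}, and a proof strategy engineered to reproduce it cannot close. A smaller but real error: your claim that $Y(t)=E[\zeta(t)\,|\,\mathcal{F}^0_t]$ for $\zeta(t):=\int_{(0,T]}e^{-\int_0^s\delta(u)du}dA(s)$ is wrong as written --- that $\zeta$ does not depend on $t$, discounts to time $0$ rather than $t$, and includes payments before $t$; the correct relation is $e^{-\int_0^t\delta(u)du}Y(t)=E\big[\int_{(t,T]}e^{-\int_0^s\delta(u)du}dA(s)\,\big|\,\mathcal{F}^0_t\big]$.
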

\begin{proof}
 By applying Corollary \ref{Corollary_explicit-mart-repres} on the process
 \begin{align*}
   \zeta(t):= \int_{(t,T]} e^{-\int_0^s \delta(u)du}\Big(\gamma_{X(s-)}(s)\,ds +a_{X(s-)}(s)\, d\nu(s)\Big) ,
 \end{align*}
 we obtain that
\begin{align}\label{expl-form-Z2}
  Y(t) e^{-\int_0^t \delta(u)du} = - \int_{(0,t]} e^{-\int_0^s \delta(u)du}\Big(\gamma_{X(s-)}(s)\,ds +a_{X(s-)}(s)\, d\nu(s)\Big) + \int_{(0,t]} \widetilde{Z}(s) d M(s)
\end{align}
for 
$$
Z_{ij}(t) =   E [ \zeta(t-) | \mathcal{F}^0_{T_n}, T_{n+1}=t, X(T_{n+1})=j] - \frac{E [\zeta(t-)  \mathbb{1}_{\{ T_{n+1}>t\}} | \mathcal{F}^0_{T_n}]}{E [ \mathbb{1}_{\{ T_{n+1}>t\}} | \mathcal{F}^0_{T_n}]}
$$ 
on $\{T_n < t \leq T_{n+1}, X_{t-}=i\}$.
Since $d M_{ij}(t) = I_i(t-) d M_{ij}(t)$ for all $i\neq j$ and all $t>0$, for each $t>0$ the processes  $Z_{ij}(t)$ almost surely equal
\begin{align*}
  I_i(t-)Z_{ij}(t) = I_i(t-)\Big(E[ \zeta(t-) |\mathcal{F}^0_{t-}, X(t)=j]- E[ \zeta(t-) |\mathcal{F}^0_{t-}, X(t)=i] \Big).
\end{align*}
Since $\zeta(t)- \zeta(t-)= a(t) \Delta \nu (t)$ is $\mathcal{F}^0_{t-}$-measurable, in the latter formula we may replace $\zeta(t-)$ by $\zeta(t)$.
Integration by parts yields that
\begin{align*}
  d\left(Y(t) e^{-\int_0^t \delta(u)du}\right) =e^{-\int_0^t \delta(u)du} d Y(t) - Y(t) e^{-\int_0^t \delta(u)du} \delta(t) dt,
\end{align*}
 so equation \eqref{expl-form-Z2} can be rewritten to
\begin{align*}
   e^{-\int_0^t \delta(u)du} d Y(t)
=   e^{-\int_0^t  \delta(u)du} \bigg(\Big( \delta(t)Y(t) dt -\gamma_{X(t-)}(t) \Big)  dt -a_{X(t-)}(t) \, d\nu(t)+ Z(t) d M(t)\bigg),
\end{align*}
 using the fact that $E[ \zeta(t) |\mathcal{F}^0_{t-}, X(t)=j] = e^{-\int_0^t \delta(u)du} E[ Y(t) |\mathcal{F}^0_{t-}, X(t)=j]$. Equation \eqref{bsde-1-Y} follows now from the Radon-Nikodym Theorem.
\end{proof}
Note that the BSDE \eqref{bsde-1-Y} differs from the stochastic Thiele equations according to Norberg (1992) and M{\o}ller (1993), since we additionally use the decomposition \eqref{CompensatorOfA}. The latter decomposition has the advantage that \eqref{bsde-1-Y} has a form that is more common in the literature on BSDEs.
\begin{example}[Markov models]
If $X$ is a Markov process, i.e.~a process for which the jump intensities $(\l_{ij})_{ij}$ are deterministic, then 
$$ E[ Y(t) |\mathcal{F}^0_{t-},X(t)]=  E[ Y(t) |X(t)]$$ 
almost surely and the process $Z=(Z_{ij})_{ij}$  can be represented as 
\begin{align*}
   Z_{ij}(t)=   E[  Y(t) |X(t)=j]- E[ Y(t) | X(t)=i].
\end{align*}
Furthermore, if the processes $\alpha_i$, $a_i$ and $\b_{ij}$ are deterministic, it can be shown  that $Y(t)=V(t,X(t))$ and $Z_{ij}(t)= V(t,j)- V(t,i)$ for a deterministic function $V(t,x)$ that solves the Thiele equation, cf.~M{\o}ller (1993), Djehiche \& L\"{o}fdahl (2016).
\end{example}
\begin{example}[semi-Markov models]
If $X$ is a semi-Markov process, i.e.~a process for which the jump intensities have the form $\l_{ij}(t)  =\mu_{ij}(t, U(t))$ for deterministic functions $\mu_{ij}(t,u)$, then 
$$ E[ Y(t) |\mathcal{F}^0_{t-},X(t)]=  E[ Y(t) |U(t), X(t)]$$ 
almost surely and the process $Z=(Z_{ij})_{ij}$ can be represented as 
\begin{align*}
   Z_{ij}(t)=   E[  Y(t) |U(t),X(t)=j]- E[ Y(t) | U(t),X(t)=i].
\end{align*}
Furthermore, if the payment processes are of the form $\alpha_i(t)=\alpha_i(t,U(t))$, $a_i(t)=a_i(t,U(t))$ and  $\b_{ij}(t)=\b_{ij}(t,U(t))$ for deterministic functions $\alpha_i(t,u)$, $a_i(t,u)$ and $\b_{ij}(t,u)$, it can be shown that $Y(t)=V(t,X(t),U(t))$  and $Z_{ij}(t)=V(t,j,U(t))- V(t,i,U(t))$ for some deterministic function $V(t,x,u)$ which solves the semi-Markov Thiele equation, cf.~M{\o}ller (1993).
\end{example}

\subsection{Nonlinear reserving}
By nonlinear reserving we mean the case where the payment processes $\alpha_i(t)$, $a_i(t)$ and $\beta_{ij}(t)$ may  depend on the prospective reserve $Y(t-)$ and the process $Z(t)$,
\begin{align}\label{Nonlinear-Reserving-Payment-Functions}
\begin{split}
\alpha_i(t)(\omega)&:=\alpha_i(t,\omega, Y(t-),Z(t)),\\
a_i(t)(\omega)&:=a_i(t,\omega, Y(t-),Z(t)),\\
\beta_{ij}(t)(\omega)&:=\beta_{ij}(t,\omega,Y(t-),Z(t)).
\end{split}
\end{align}
As a consequence, the definition of $Y$ according to \eqref{reserve-1} is circular, and it is not clear if $Y$ is still well-defined, i.e.~it is unclear whether $Y$ exists and whether it is unique. We will now present rather mild conditions that guarantee that the nonlinear prospective reserve is indeed well-defined.

Assume that the process $\gamma_i (t)(\omega)= \gamma_i(t,\omega,Y(t-),Z(t))$ 
satisfies:
\begin{itemize}
\item[(A3)] There is some real $C\in [ 0,\infty)$ such that $P$-a.s., for all $t\in[0,T]$, $y,\overline{y}\in\mathbb{R}, \,z=(z_{ij}),\,\overline{z}=(\overline{z}_{ij}), z_{ij}, \overline{z}_{ij}\in\mathbb{R}$,
$$
|\gamma_i(t,\omega, y,z)-\gamma_i(t,\omega, \overline{y},\overline{z})|\le C \left(|y-\overline{y}|+\|z-\overline{z}\|_{\Lambda}\right), \quad i \in \mathcal{S}.
$$
\item[(A4)] $E\left[\int_0^T|\gamma_i(t,\omega,0,0)|^2dt\right]<\infty$, $i \in \mathcal{S}$.
\end{itemize}
Furthermore, we make the following assumption:
\begin{itemize}
\item[(A5)] There are reals $C_1\in[0,1)$ and $C_2\in[0,\infty)$ such that $d P \times d \nu$-a.e., for all  $y,\overline{y}\in\mathbb{R}, \,z=(z_{ij}),\,\overline{z}=(\overline{z}_{ij}), z_{ij}, \overline{z}_{ij}\in\mathbb{R}$,
$$
|a_i(t,\omega, y,z)-a_i(t,\omega, \overline{y},\overline{z})|^2\le C_1 |y-\overline{y}|^2+C_2 \|z-\overline{z}\|_{\Lambda}^2, \quad i \in \mathcal{S}.
$$
\end{itemize}

\begin{proposition}\label{Nonlinear-NonMarkovian-StochThiele}
Suppose that \eqref{Nonlinear-Reserving-Payment-Functions} holds. Under the assumptions (A3) to (A5), there exists a unique solution $(Y,Z)$ to \eqref{bsde-1-Y} such that $Y$ is adapted, $Z$ is predictable and 
$$
E\left[\underset{t\in[0,T]}{\sup}|Y(t)|^2+\int_0^T\|Z(t)\|^2_{\Lambda}dt\right]< \infty.
$$
 Furthermore, the solution $(Y,Z)$ satisfies \eqref{eqn:Norberg-Thiele1}  and \eqref{expl-form-Z}.
\end{proposition}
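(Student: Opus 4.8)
The plan is to obtain $(Y,Z)$ as the unique fixed point of a contraction mapping built from the linear theory of Proposition~\ref{bsde-1}. Let $\mathcal{H}$ be the space of pairs $(Y,Z)$ with $Y$ adapted and right-continuous, $Z$ predictable, and
$$E\Big[\sup_{t\in[0,T]}|Y(t)|^2+\int_0^T\|Z(t)\|^2_\Lambda\,dt\Big]<\infty,$$
normed by a weighted version $\|(Y,Z)\|_\rho^2:=E[\int_0^T e^{\rho t}(|Y(t)|^2+\|Z(t)\|^2_\Lambda)\,dt]$ (augmented by the values of $\|Z\|_\Lambda$ at the jump times of $\nu$, see below) for a parameter $\rho>0$ to be chosen large. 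Given $(y,z)\in\mathcal{H}$, freeze the reserve dependence by setting $\widehat\gamma_i(t):=\gamma_i(t,\omega,y(t-),z(t))$ and $\widehat a_i(t):=a_i(t,\omega,y(t-),z(t))$. Assumptions (A3)--(A5) together with (A4) and the membership $(y,z)\in\mathcal{H}$ guarantee that these frozen coefficients satisfy the integrability required in (A2), so Proposition~\ref{bsde-1} produces a unique pair $\Phi(y,z):=(Y,Z)$ solving the linear BSDE \eqref{bsde-1-Y} with coefficients $\widehat\gamma,\widehat a$, and this pair obeys both \eqref{eqn:Norberg-Thiele1} and \eqref{expl-form-Z}. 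A fixed point of $\Phi$ is exactly a solution of the nonlinear problem, so it remains to prove that $\Phi$ contracts.

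To establish the contraction, take two inputs $(y,z),(\bar y,\bar z)$ with images $(Y,Z),(\bar Y,\bar Z)$ and apply the It\^o formula to $e^{\rho t}|Y(t)-\bar Y(t)|^2$ on $[t,T]$. Taking expectations kills the stochastic integral against $M$ (it is a genuine martingale by the $\mathcal{H}$-integrability), and the bracket of the martingale part reproduces $E[\int_t^T e^{\rho s}\|Z-\bar Z\|^2_\Lambda\,ds]$ on the left-hand side. The drift difference is controlled by the Lipschitz bound (A3), $|\widehat\gamma-\widehat{\bar\gamma}|\le C(|y-\bar y|+\|z-\bar z\|_\Lambda)$; via Young's inequality its cross term with $Y-\bar Y$ is absorbed into the $\rho\int e^{\rho s}|Y-\bar Y|^2$ term at the cost of choosing $\rho$ large, which is the standard mechanism for damping a $dt$-Lipschitz dependence.

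The main obstacle is the $d\nu$-integral. Because $\nu$ is a deterministic step function with finitely many jumps $t_1<\dots<t_m$ in $[0,T]$, the candidate reserve jumps at each $t_k$ by $\Delta Y(t_k)=-\widehat a_{X(t_k-)}(t_k)\,\Delta\nu(t_k)$; note that a.s.\ no transition of $X$ occurs at the fixed time $t_k$ (the compensators are absolutely continuous by (A1)), so the $Z\,dM$ term contributes no jump there and the jump is governed purely by $a$. Such instantaneous jumps cannot be damped by enlarging $\rho$, since the exponential weight only acts on integrals against Lebesgue measure. In the It\^o expansion each jump contributes, through the drop $|Y-\bar Y|^2(t_k-)-|Y-\bar Y|^2(t_k)$, a term bounded via (A5) by $C_1|y(t_k-)-\bar y(t_k-)|^2+C_2\|z(t_k)-\bar z(t_k)\|^2_\Lambda$ (up to a small output term split off by Young's inequality). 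It is precisely here that the requirement $C_1\in[0,1)$ is indispensable: the coefficient $C_1$ multiplying the $Y$-difference at the jump must be strictly below one so that, after summing over the finitely many $t_k$ and combining with the continuous estimate, the overall Lipschitz factor of $\Phi$ is strictly less than one. The accompanying $C_2$-term forces one to control $Z$ at the deterministic times $t_k$; since there are finitely many of them and the explicit representation of $Z$ from Corollary~\ref{Corollary_explicit-mart-repres} is defined pointwise and square-integrable, one includes the quantities $E\|Z(t_k)\|^2_\Lambda$ in the norm on $\mathcal{H}$, which renders the $C_2$-contribution harmless.

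Assembling the continuous estimate (governed by $\rho$) and the jump estimate (governed by $C_1<1$), one fixes $\rho$ large enough that $\Phi$ is a strict contraction on $\mathcal{H}$. The Banach fixed point theorem then yields a unique $(Y,Z)\in\mathcal{H}$ with $\Phi(Y,Z)=(Y,Z)$, that is, a unique solution of \eqref{bsde-1-Y} in the stated class. Since this $(Y,Z)$ is by construction the linear reserve for the self-consistent coefficients $\gamma_i(t,Y(t-),Z(t))$ and $a_i(t,Y(t-),Z(t))$, Proposition~\ref{bsde-1} guarantees that it also satisfies the conditional-expectation representation \eqref{eqn:Norberg-Thiele1} and the explicit form \eqref{expl-form-Z}, completing the argument.
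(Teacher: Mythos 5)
Your plan --- freeze the coefficients, solve the resulting linear equation via Proposition \ref{bsde-1}, and close a Banach fixed-point argument in an exponentially weighted norm --- is a genuinely different route from the paper, which instead writes $\gamma_{X(t-)}(t)\,dt + a_{X(t-)}(t)\,d\nu(t) = F(t)\,d\mu(t)$ for $d\mu = dt + d\nu$ and invokes the general existence and uniqueness theorem for BSDEs in general spaces (Theorem 6.1 of Cohen \& Elliott (2012)), recovering \eqref{eqn:Norberg-Thiele1} and \eqref{expl-form-Z} afterwards exactly as you do. Your diagnosis of the two mechanisms is also correct: $\rho$-damping absorbs the $dt$-Lipschitz dependence (A3), while the atoms of $\nu$ admit no damping, and it is there that $C_1<1$ is essential.

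The gap is in your treatment of the $z$-argument of $a$ at the atoms of $\nu$. At such a time $t_k$ your estimate produces the input terms $C_1E|y(t_k-)-\bar y(t_k-)|^2$ and $C_2E\|z(t_k)-\bar z(t_k)\|^2_\Lambda$. A minor issue first: your norm does not dominate the former --- pointwise values of c\`adl\`ag processes are determined by, but not bounded in terms of, the $dt$-weighted integral norm --- so you must also add the finitely many quantities $E|Y(t_k-)|^2$ to the norm; this repair is harmless precisely because $C_1<1$ supplies the contraction factor for those terms. The serious problem is the $C_2$ term. Declaring $E\|Z(t_k)\|^2_\Lambda$ to be part of the norm does not render this contribution harmless: for $\Phi$ to contract you must bound the \emph{output} quantity $E\|Z(t_k)-\bar Z(t_k)\|^2_\Lambda$, where now $Z,\bar Z$ are produced by the martingale representation, by a constant strictly less than one times the input norm, and you give no such bound. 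None is available from the structure: the jump is instantaneous, so enlarging $\rho$ does nothing; $C_2$ is an arbitrary finite constant, not assumed small; and the pointwise value at $t_k$ of the represented integrand (built from conditional expectations given a transition at exactly the deterministic time $t_k$) admits no a priori estimate against $E|Y(t_k)|^2$ or the integral norms. Indeed the equation itself does not pin down $Z(t_k)$: since the compensators are absolutely continuous, $M$ a.s.\ does not jump at the fixed time $t_k$, so $Z(t_k)$ may be altered on a $dP\times I_i(t-)\lambda_{ij}(t)dt$-null set without affecting the stochastic integral, yet such an alteration changes $a_i(t_k,\cdot,Z(t_k))$ and hence $Y$. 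This is exactly the point that the framework used by the paper regulates: in Cohen \& Elliott the Lipschitz condition in $z$ is taken with respect to the seminorm induced by $d\langle M\rangle/d\mu$, which vanishes at atoms of $\mu=dt+d\nu$ not charged by $\langle M\rangle$, so the driver at jump times of $\nu$ effectively cannot depend on $z$ at all. Your argument closes if you impose that $a_i(t_k,\cdot,\cdot)$ is $z$-free at atoms of $\nu$ (or fix a convention such as $Z(t_k):=0$ there and restrict $\mathcal{H}$ accordingly), but as written the contraction does not close.
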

\begin{proof}
Existence and uniqueness of a solution $(Y,Z)$ to the BSDE \eqref{bsde-1-Y} follow from Theorem 6.1. in Cohen \& Elliott (2012), using that $\gamma_{X(t-)}(t)\,dt + a_{X(t-)}(t)\, d \nu(t)=F(t) d \mu(t)$ 
for $d \mu (t) :=  dt + d\nu (t)$ and $F(t):= \mathbb{1}_{\{\nu(t)> \nu(t-)\}} a_{X(t-)}(t) + \mathbb{1}_{\{\nu(t)= \nu(t-)\}} \gamma_{X(t-)}(t)$.
Equation \eqref{bsde-1-Y} and integration-by-parts imply that $\tilde{Y}(t):=e^{-\int_0^t \delta(u) du} Y(t)$ satisfies
$$ -d\tilde{Y}(t) =e^{-\int_0^t \delta(u) du}\bigg(\gamma_{X(t-)}(t)\,dt + a_{X(t-)}(t)\, d \nu(t) -Z(t)dM(t)\bigg), \quad \tilde{Y}(T)=0,$$
which, in turn, implies that
 $$ \tilde{Y}(t)= E\Big[\int_{(t,T]} e^{-\int_0^s \delta(u)du}\Big(\gamma_{X(s-)}(s)\,ds + a_{X(s-)}(s)\, d \nu(s)\Big)\Big|\mathcal{F}^0_t\Big].$$
By multiplying the latter line with $e^{\int_0^t \delta(u) du}$ we obtain \eqref{eqn:Norberg-Thiele1}.
Finally, apply Proposition \ref{bsde-1} in order to obtain the representation for $Z$.
\end{proof}

%

Djehiche \& L\"{o}fdahl (2016) give a number of examples of life insurance contracts where the Thiele BSDE is nonlinear, including the prominent example of surrender payments. The surrender value of a life insurance at time $t$ typically equals the prospective reserve $Y(t-)$ minus a lapse fee, such that the assumptions of Proposition \ref{Nonlinear-NonMarkovian-StochThiele} hold. The prospective reserve $Y(t-)$ is also relevant if a contract is modified at time $t$. For example, if a free policy option is exercised at time $t$, then $Y(t-)$ is seen as the policyholders wealth at time $t$ which serves as a lump sum premium for the modified contract. However, in the next section we will see that Proposition \ref{Nonlinear-NonMarkovian-StochThiele} does not cover general contract modifications, such that further techniques are needed that go beyond Proposition \ref{Nonlinear-NonMarkovian-StochThiele}.

\section{Contract modifications}

Life insurance cash flows become reserve dependent upon contract modifications. At the time where a contract is changed, the current prospective reserve of the old contract is usually seen as the policyholder's wealth and is used as a lump sum premium for the new contract. In this section we model the evolution of an insurance policy as a pair of jump processes $(X,J)$,  where $X$ is the state of the policyholder and $J$ describes the different modes of the policy as a result of contract modifications.

If $(X,J)$ is a Markov process and $X$ and $J$ have no simultaneous jumps, then 
 actuarial equivalence at time zero is maintained under contract modifications if   the sum at risk upon a contract modification is zero, see e.g.~Henriksen et al.~(2014).  More precisely, the  Cantelli Theorem (cf.~Milbrodt \& Stracke (1997)) states that jumps of $J$ can be ignored in the calculation of the state-wise prospective reserves  if the sum at risk upon a jump of $J$ is zero. In this section we will generalize that concept to non-Markovian models.  The sum at risk condition will lead us to a nonlinear BSDE that, unfortunately, is  not covered by Proposition \ref{Nonlinear-NonMarkovian-StochThiele}, but we  will show  a way out based on a recursion scheme.

\subsection{State space expansion} 
Let $(X,J)$ be c{\`a}dl{\`a}g jump processes, defined on the filtered probability space $(\Omega,\F,\Ff=(\F_t)_{ t\ge 0},P)$, where $\F_t$ is the completed natural filtration of $(X,J)$ which satisfies the usual conditions.  Let $\mathcal{J}\subset \mathbb{N}_0$ be the state space of $J$, i.e.~the set of possible modes of the insurance contract. Let the
 $\Ff$-stopping  times $0=\tau_0< \tau_1< \tau_2 < \ldots $  describe the jump times of process $J$. We suppose  that $(X(0),J(0))$ is deterministic.  Moreover, we assume that  $X$ and $J$ have no simultaneous jumps.  This assumption is common in the actuarial literature for modelling lapse and contract modifications. It could  be  relaxed, but at the cost of a tedious notation, so we prefer to claim it here.

For $X$ and $J$ we define corresponding indicator processes $I^0_i(t)=\mathbf{1}_{\{X(t)=i\}}, I^1_i(t)=\mathbf{1}_{\{J(t)=i\}}$ and corresponding
counting processes 
\begin{align*}
N^0_{ij}(t)&:=\#\{s\in(0,t]:X(s-)=i, X(s)=j\},\quad N^0_{ij}(0)=0,    \\
N^1_{kl}(t)&:=\#\{s\in(0,t]:J(s-)=k, J(s)=l\},\quad N^1_{kl}(0)=0,
\end{align*}
and set
\begin{align*}
N^0(t):=\sum_{i,j : i\ne j} N^0_{ij}(t),\qquad N^1(t):=\sum_{k,l : k\ne l} N^1_{kl}(t), \quad t \geq 0.
\end{align*}
Let $\Lambda^0=(\l^0_{ij})_{ij}$ and $\Lambda^1=(\l^1_{kl})_{kl}$ denote $\mathbb{F}$-predictable   jump intensities of the processes $X$ and  $J$. 
Occasionally we will write 
$$
\l^0_{ij}(t) = \l^0_{ij}(t,J(t-)),\quad \l^1_{kl}(t) = \l^1_{kl}(t,X(t-))
$$
when the dependence of the transition intensities on the states of $J(t-)$ or $X(t-)$ shall be made explicit. This means that under each mode $k \in \mathcal{J}$, $X$ is a pure jump process with random intensities $\l^0_{ij}(t,k)$ and given each state $i\in\mathcal{S}$, $J$ is a pure jump process with random intensities $\l^1_{kl}(t,i)$.

We assume that
\begin{itemize}
\item[(A5)] 
$$
E\left[\int_0^T \Big( \sum_{i,j:i \neq j}  I^0_i(t-) \l^0_{ij}(t)+ \sum_{k,l:\, k\neq l}I^1_k(t-)\l^1_{kl}(t)  \Big)dt\right]<\infty.
$$
\end{itemize}

Since we assumed that $X$ and $J$ have no simultaneous jumps, we  can see $\widetilde{X}:=(X,J)$   as a state space expansion of the process $X$  with corresponding counting processes $((N^0_{ij})_{ij}, (N^1_{kl})_{kl})$ and associated martingales
 \begin{align}\label{mart-1}\begin{split}
M^0_{ij}(t)&=N^0_{ij}(t)-\int_0^t I^0_i(s-)\l^0_{ij}(s)\, ds,\quad M^0_{ij}(0)=0,\\
M^1_{kl}(t)&=N^1_{kl}(t)-\int_0^t I^1_k(s-)\l^1_{kl}(s)\, ds,\quad M^1_{kl}(0)=0.
\end{split}\end{align}
 That means that all results from the previous sections for the process $X$ can be transferred to the expanded jump process $\widetilde{X}:=(X,J)$. 
\begin{example}[Markovian survival model]\label{ExampleBM15}  The Markov survival models with surrender and free policy options studied in Buchardt et al.~(2015) and Buchardt \& M{\o}ller (2015) can be seen as a special class of the modulated policyholder model suggested above. As an  example, let $\mathcal{J}=\{0,1\}$ where $0$ stands for a standard policy mode and  $1$ denotes the free policy mode. Assume further that the state $X$ of the policyholder  takes values in $\cS=\{0,1,2\}$ where 0=alive, 1=dead and 2=surrender.  If we assume $\widetilde{X}:=(X,J)$ to be a Markov process with state space $\widetilde{\cS}:=\cS\times\mathcal{J}$ where
$$\begin{array}{lll}
(0,0)= \mbox{alive},\, (1,0)=\mbox{dead},\, (2,0)=\mbox{surrender},\, (0,1)=\mbox{alive free policy},\\ (1,1)=\mbox{dead free policy}, \,\, (2,1)=\mbox{surrender free policy},
\end{array}
$$
and intensities
\begin{align*}
\l^0_{01}(t,0,\omega)=\mu_{ad}(t),&& \l^0_{02}(t,0,\omega)=\mu_{as}(t),&& \l^1_{01}(t,0,\omega)=\mu_{af}(t),\\
 \l^0_{01}(t,1,\omega)=\mu^f_{ad}(t), && \l^0_{02}(t,1,\omega)=\mu^f_{as}(t),    
\end{align*}
we obtain the survival model suggested in Buchardt \& M{\o}ller (2015), Section 3.2.
\end{example}

\subsection{Modifications without actuarial equivalence} 
If maintaining of actuarial equivalence is not an objective at contract modifications, then we can simply transfer the results from Section \ref{SectionProspectiveReserves} to the expanded process $\widetilde{X}:=(X,J)$.

Suppose  that the payment process $A(t)$ is of the form
\begin{align*}\begin{array}{lll}
dA(t)=\big(\alpha_{\widetilde{X}(t-)}(t)\,dt+a_{\widetilde{X}(t-)}(t) \, d\nu(t)\big)+\sum_{i,j: i\neq j} \b_{ij }(t)\,dN^0_{ij }(t)+\sum_{k,l: k\neq l} \bar\b_{kl }(t)\,dN^1_{kl}(t),
\end{array}
\end{align*}
where $\alpha_{(i,k)}$, $a_{(i,k)}$, $\b_{ij}$ and $\bar\b_{kl}$ are $\mathbb{F}$-predictable processes which satisfy
\begin{align}\label{alpha-3}
E\bigg[ \int_{(0,T]}|a_{\widetilde{X}(t-)}(t)|^2d\nu(t)  + \int_0^T\left(|\alpha_{\widetilde{X}(t-)}(t)|^{2}+ \|\b(t)\|^2_{\Lambda^0}+\|\bar\b(t)\|^2_{\Lambda^1} \right)dt \bigg]<\infty.
\end{align}
In addition to the life insurance cash flow as defined in Section \ref{Section_Cashflow}, here we include transition payments $\bar\b_{kl}$ upon contract modifications, e.g.~a surrender payment. 
Occasionally we will write 
$$
\b^0_{ij}(t) = \b^0_{ij}(t,J(t-)),\quad \bar\b^1_{kl}(t) = \bar\b^1_{kl}(t,X(t-))
$$
when the dependence of the transition paymnets on the states of $J(t-)$ or $X(t-)$ shall be made explicit.
Setting
$$
\gamma_{(i,k)}(t):=\alpha_i(t)+\sum_{j:j\neq i} \b_{ij}(t)\l^0_{ij}(t)+
\sum_{l:l\neq k} \bar\b_{kl}(t)\l^1_{kl}(t,i)
$$
and using the martingales associated with $((N^0_{ij})_{ij},((N^1_{kl})_{kl})$, the prospective reserve at time $t$ satisfies
\begin{align}\label{reserve-J}
Y(t)=E\Big[\int_{(t,T]} e^{-\int_t^s \delta(u)du}\left(\gamma_{\widetilde{X}(s-)}(s)\,ds+a_{\widetilde{X}(s-)}(s)d\nu(s) \right)\Big|\F_t\Big].
\end{align}
By applying the results from Section \ref{SectionProspectiveReserves}  on the expanded state space process $\widetilde{X}=(X,J)$, we can show that the prospective reserve \eqref{reserve-J} is the unique solution of the BSDE
\begin{align}\label{bsde-1-Y-J}\begin{array}{lll}
&dY(t)=\left(-\delta(t)Y(t)+\gamma_{\widetilde{X}(t-)}(t)\right)dt+a_{\widetilde{X}(t-)}(t)d\nu(t) +Z^0(t)dM^0(t)+Z^1(t)dM^1(t),\\
& Y(T)=0,
\end{array}
\end{align}
where  $Z^0=(Z^0_{ij}, \, i\neq j)$ and $Z^1=(Z^1_{kl}, \, k\neq l)$ are unique predictable processes  satisfying
\begin{equation}\label{Z}
E\left[\int_0^T \left(\|Z^0(s)\|^2_{\Lambda^0}+\|Z^1(s)\|^2_{\Lambda^1}\right)ds\right]<\infty.
\end{equation}
Since $\mathbb{F}$ is the natural filtration of $(X,J)$ and  the two processes have no simultaneous jumps,  by following the arguments in the proof of Proposition \ref{bsde-1}  we can show that  $Z^0$ and $Z^1$ almost surely satisfy
\begin{align}\begin{split}\label{Z0Z1Def}
  I^0_i(t-) Z^0_{ij}(t)&=  \sum_k I^0_i(t-) I^1_k(t-) \Big(E[ Y(t) |\mathcal{F}_{t-}, \widetilde{X}(t)=(j,k)]- E[ Y(t) |\mathcal{F}_{t-}, \widetilde{X}(t)=(i,k)]\Big),\\
  I^1_k(t-)Z^1_{kl}(t)&=  \sum_i I^0_i(t-) I^1_k(t-) \Big(E[ Y(t) |\mathcal{F}_{t-}, \widetilde{X}(t)=(i,l)]- E[ Y(t) |\mathcal{F}_{t-},\widetilde{X}(t)=(i,k)]\Big)
\end{split}\end{align}
for each $t>0$.

\begin{example}[The Markovian case and Thiele's differential equation]
Suppose that 
\begin{align*}
 a_i(t)=0,\quad 
 \b_{ij}(t)=\b_{ij}(t,J(t-)),\quad  \bar\b_{kl}(t)=\bar\b_{kl}(t,X(t-)),
\end{align*}
and let for each $i,j \in \mathcal{S}$, $i\neq j$, and $k,l\in  \mathcal{J}$, $k \neq l$, the payment processes
$\alpha_{(i,k)}(t)$,  $\b_{ij}(t,k)$, $\bar\b_{kl}(t,i)$ and the transition intensities  $\l^0_{ij}(t,k)$  and $\l^1_{kl}(t,i)$ be deterministic functions in $t$.  Assume further that the discount factor $\delta$ is deterministic and continuous in $t$. Then the process $\widetilde{X}=(X,J)$ is a Markov process and  the prospective reserve \eqref{reserve-J} becomes
\begin{equation*}
Y(t)=E\Big[\int_{t}^{T} e^{-\int_t^s \delta(u)du}\gamma_{\widetilde{X}(s-)}(s)\,ds\Big| X(t),J(t)\Big]=V(t,X(t), J(t))
\end{equation*}
for some deterministic function $V:\,[0,T] \times\cS \times \mathcal{J}\rightarrow \R$.
In particular,  we may apply the Feynman-Kac's formula (cf.\@ Lemma 2.1 in Djehiche \& L\"{o}fdahl (2016)) to see that the function
\begin{equation*}
V(t,i,k)=E\Big[\int_t^T e^{-\int_t^s \delta(u)du}\,\gamma_{\widetilde{X}(s-)}(s)\,ds\Big|X(t)=i, J(t)=k\Big]
\end{equation*}
is differentiable in $t$ and satisfies the following ordinary differential equation
\begin{equation}\label{Norberg-Thiele-J-3}
\left\{\begin{array}{lll}
\frac{d}{dt}V(t,i,k)=\delta(t)V(t,i,k)-\gamma_{(i,k)}(t)-Q_{0}V(t,i,k)-Q_{1}V(t,i,k),
\\ V(T,i,k)=0, \quad (i,k)\in\cS \times \mathcal{J},
\end{array}
\right.
\end{equation}
where
$$\begin{array}{lll}
Q_{0}V(t,i,k)=\sum_{j: j\neq i} \l^0_{ij}(t,k)(V(t,j,k)-V(t,i,k)), \\ 
Q_{1}V(t,i,k)=\sum_{l:l \neq k} \l^1_{kl}(t,i)(V(t,i,l)-V(t,i,k)),
\end{array}
$$
which includes a modulated version of the celebrated Thiele equation. Indeed, in terms of the  modulated sum-at-risk, in mode $k$, assuming $\bar\b_{kl}=0$, 
$$
R_{ij}(t,k):=\beta_{ij}(t,k)+V(t,j,k)-V(t,i,k),
$$
the equation \eqref{Norberg-Thiele-J-3} can be rearranged to take the form
$$\left\{\begin{array}{lll}
\frac{d}{dt}V(t,i,k)=\delta(t)V(t,i,k)-\alpha_{(i,k)}(t)-Q_{1}V(t,i,k)-\sum_{j:j\neq i} R_{ij}(t,k)\l^0_{ij}(t,k)=0, \\
 V(T,i,k)=0,\quad (i,k)\in\mathcal{S}\times \mathcal{J}.
\end{array}
\right.
$$
\end{example}

\medskip
\subsection{Modifications that maintain actuarial equivalence} 
Actuarial equivalence is maintained upon a contract modification at random time $\tau$ if the prospective reserve on $[0,\tau)$ is unaffected by the modification. According to the actuarial literature (see e.g.~Henriksen et al.~(2014)), in Markov models this can be achieved  by making sure that the sum-at-risk for the contract modification equals zero. In this section we generalize that concept to non-Markovian models.

\begin{proposition}\label{Construction_measrures_Pm}
For each $m \in \mathbb{N}_0$   there exists a unique probability measure $P^m$ on $(\Omega, \mathcal{F})$ such that the bivariate jump process $(X, J)$ has transition rates of $(\lambda^0_{ij},\kappa_m \lambda^1_{kl})_{ij,kl}$, where $\kappa_{m}(t):= \mathbb{1}_{\{ t \leq \tau_{m}  \}}$, $t \geq 0$. Moreover,  it holds that
\begin{enumerate}[(a)]
    \item  $P^m= P$ on $\mathcal{F}_{\tau_{m}}$,
    \item $P^m \sim P$ on $\mathcal{F}_{\tau_{m+1}-}$,
    \item $P^m \ll  P$ on  $\mathcal{F}_{\infty}$.
\end{enumerate}
\end{proposition}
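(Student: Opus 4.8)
The plan is to realise $P^m$ as the law of the bivariate jump process $\widetilde{X}=(X,J)$ in which the modification process $J$ is ``frozen'' after its $m$-th jump $\tau_m$: the factor $\kappa_m(t)=\mathbf{1}_{\{t\le\tau_m\}}$ kills the $J$-intensity on $(\tau_m,\infty)$ while leaving the $X$-intensity untouched, so that under $P^m$ the contract can undergo at most $m$ modifications. For existence and uniqueness I would apply Proposition \ref{ModelFromCompensators} to $\widetilde{X}$ with the predictable intensities $f^0_{ij}=\lambda^0_{ij}$ and $f^1_{kl}=\kappa_m\lambda^1_{kl}$. The two points to check are that $\kappa_m$ is predictable --- it is the indicator of the stochastic interval $[0,\tau_m]$, a left-continuous adapted process --- and that the integrability hypothesis holds; the latter is immediate since $\kappa_m\lambda^1_{kl}\le\lambda^1_{kl}$, so the modified rates are dominated by the original ones of (A5) (after discarding, if necessary, a $P$-null set on which the pathwise bound could fail). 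This yields a unique probability measure $P^m$ on $\mathcal{F}_\infty$ carrying the prescribed rates.

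Next I would identify the Radon--Nikodym density. By the Girsanov theorem for marked point processes, $dP^m/dP$ restricted to $\mathcal{F}_t$ equals the Dol\'eans--Dade exponential
$$L^m_t=\mathcal{E}\Big(\sum_{k,l:k\ne l}\int_{(0,\cdot]}(\kappa_m(s)-1)\,dM^1_{kl}(s)\Big)_t,$$
which, because the jump times of $N^1$ are exactly $\tau_1,\tau_2,\dots$ and $\kappa_m$ equals $1$ at $\tau_1,\dots,\tau_m$ but $0$ at $\tau_{m+1}$, collapses to the explicit form
$$L^m_t=\mathbf{1}_{\{t<\tau_{m+1}\}}\exp\Big(\int_{\tau_m\wedge t}^{t}\bar\lambda(s)\,ds\Big),\qquad \bar\lambda(s):=\sum_{k,l:k\ne l}I^1_k(s-)\lambda^1_{kl}(s).$$
The three features to extract are: $L^m\equiv 1$ on $[0,\tau_m]$; $L^m$ is strictly positive and finite on $[0,\tau_{m+1})$; and $L^m$ vanishes identically from $\tau_{m+1}$ onward.

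Reading these off gives all three claims. For (a), the intensities of $\widetilde{X}$ under $P^m$ and $P$ coincide on $[0,\tau_m]$, so either by the uniqueness above applied to the stopped process $\widetilde{X}^{\tau_m}$, or since $E_P[L^m_\infty\mid\mathcal{F}_{\tau_m}]=L^m_{\tau_m}=1$, we get $P^m=P$ on $\mathcal{F}_{\tau_m}$. For (b), the density on $\mathcal{F}_{\tau_{m+1}-}$ is $L^m_{\tau_{m+1}-}=\exp(\int_{\tau_m}^{\tau_{m+1}}\bar\lambda(s)\,ds)$, which is finite $P$-a.s.\ by (A5) and strictly positive, hence $P^m\sim P$ there. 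For (c), $dP^m=L^m_\infty\,dP$ with $L^m_\infty\ge 0$ forces $P^m\ll P$ on $\mathcal{F}_\infty$, while equivalence fails because $L^m_\infty=0$ on $\{\tau_{m+1}<\infty\}$, a set of positive $P$-measure in general.

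The hard part is the measure-change bookkeeping: justifying that $L^m$ is a genuine, mass-one martingale equal to $dP^m/dP$ even though it vanishes from $\tau_{m+1}$ on. Since Proposition \ref{ModelFromCompensators} already rests on the Last \& Brandt construction (Ch.~8), I would obtain the explicit density and the absolute-continuity relations from the likelihood-ratio results of that same chapter (alternatively Jacod or Br\'emaud): the support condition $\kappa_m\lambda^1_{kl}\ll\lambda^1_{kl}$ delivers $P^m\ll P$ on $\mathcal{F}_\infty$, and the breakdown of the reverse domination on $(\tau_m,\infty)$ is precisely what downgrades the local equivalence of (b) to mere absolute continuity in (c). Verifying the martingale property of $L^m$ by hand via its explicit form and (A5) is possible but more laborious, so I would route everything through the point-process likelihood theory.
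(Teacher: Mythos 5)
Your proposal is correct and follows essentially the same route as the paper: existence and uniqueness of $P^m$ come from Proposition \ref{ModelFromCompensators} applied to $\widetilde{X}=(X,J)$ with the dominated intensities $(\lambda^0_{ij},\kappa_m\lambda^1_{kl})_{ij,kl}$, and claims (a)--(c) are read off from the point-process likelihood-ratio theory of Last \& Brandt, which the paper invokes as Theorem 10.2.6 (absolute continuity on $\mathcal{F}_\infty$ from the domination $\kappa_m\lambda^1_{kl}\le\lambda^1_{kl}$) and Corollary 10.2.7 (the explicit Radon--Nikodym derivative, equal to $1$ on $\mathcal{F}_{\tau_m}$ and strictly positive on $\mathcal{F}_{\tau_{m+1}-}$), the latter being exactly the Dol\'eans--Dade exponential you write out. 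Your decision to route the ``mass-one martingale'' issue through that likelihood theory rather than verify it by hand matches the paper's own treatment, and is indeed necessary, since the step ``$dP^m=L^m_\infty\,dP$ forces (c)'' would otherwise be circular.
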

\begin{proof}
Because of Assumption (A5) there exists a $P$-zero set $N \in \mathcal{F}$ such that 
\begin{align*}
    \int_0^T\sum_{i,j: i\neq j}(I^0_i(t-) \l^0_{ij}(t)+ I^1_{k}(t-)\l^1_{kl}(t))dt  < \infty
\end{align*}
on $\Omega \setminus N$. Without loss of generality we may redefine $\l^0_{ij}$ and $\l^1_{kl}$ such that the latter inequality holds on all of $\Omega$. 
By applying Proposition \ref{ModelFromCompensators} on the process $\widetilde{X}$, we obtain that there is a unique  probability measure  $P^m$ such that  $\widetilde{X}$ has the transition intensities $(\lambda^0_{ij},\kappa_m \lambda^1_{kl})_{ij,kl}$ w.r.t.~the $P^m$-completed natural filtration of $X$.

According to Theorem 10.2.6 in Last \& Brandt (1995), we necessarily have $P^m \ll P$ on $\sigma(X(s):s\geq 0)$, which implies that the $P^m$-completion comprises the $P$-completion of the natural filtration of $X$. Moreover,  Corollary 10.2.7 in Last \& Brandt (1995) gives an explicit formula for the Radon-Nikodym derivative of $P^m$  with respect to $ P$. This Radon-Nikodym derivative equals $1$ on $\mathcal{F}_{\tau_{m}}$ and is
strictly positive  on  $\mathcal{F}_{\tau_{m+1}-}$.
\end{proof}

A soon as the $m$-th contract modification occurs at time $\tau_m$, the transition intensities $(\kappa_m \lambda^1_{kl})_{kl}$ for jumps of $J$ equal zero such that no further contract modifications can happen. Thus,  the filtered probability space $(\Omega, \mathcal{F}, P^m, \mathbb{F})$ describes a life insurance model where at most $m$ contract modifications occur. 

\begin{theorem}[Cantelli Theorem for non-Markovian models]\label{CantelliTheorem}
For $m \in \mathbb{N}_0$ let $(Y,Z^0,Z^1)$ and $(Y^m,Z^{0,m},Z^{1,m})$ be the unique solutions of BSDE \eqref{bsde-1-Y-J} under the probability measures $P$ and $P^m$, respectively,  for  $P^m$ defined as in Proposition \ref{Construction_measrures_Pm}. 
Then we have
\begin{align*}
(Y(t),Z^0(t),Z^1(t))=(Y^m(t),Z^{0,m}(t),Z^{1,m}(t)) \quad &P\textrm{-a.s.~for all }t  \in [0, \tau_{m+1})
\end{align*}
if and only if
\begin{align}\label{SumAtRiskCondition}
    \mathbb{1}_{\{\tau_m <  t \leq \tau_{m+1}\}} \sum_{l:l\neq k} \Big( \bar{\beta}_{kl}(t) + Z^1_{kl}(t)\Big) I^1_k(t-)  \lambda^1_{kl}(t)   =0 
\end{align}
$d P\times dt$ -a.e.~for all  $k \in \mathcal{J}$.
\end{theorem}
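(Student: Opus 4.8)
\medskip
The plan is to transport both reserve processes to the single measure $P^m$ and compare them there, the decisive point being that under $P^m$ the $(m+1)$-st modification never occurs: the $J$-intensities $\kappa_m\lambda^1_{kl}$ vanish on $\{t>\tau_m\}$, so $\tau_{m+1}=+\infty$ holds $P^m$-a.s., and consequently the stochastic interval $[0,\tau_{m+1})$ contains the whole deterministic interval $[0,T]$ up to a $P^m$-null set. This is what will let me invoke uniqueness for \eqref{bsde-1-Y-J} on all of $[0,T]$ under $P^m$, rather than on a random interval.

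First I would re-express the $P$-dynamics of $(Y,Z^0,Z^1)$ through the $P^m$-martingales. The intensities of $X$ are unchanged, so $M^0$ is a martingale under both measures, while \eqref{mart-1} gives $dM^1_{kl}(t)=dM^{1,m}_{kl}(t)-I^1_k(t-)\,\mathbb{1}_{\{t>\tau_m\}}\,\lambda^1_{kl}(t)\,dt$, where $M^{1,m}_{kl}$ compensates $N^1_{kl}$ under $P^m$. Substituting this, together with $\gamma_{\widetilde{X}(t-)}=\gamma^m_{\widetilde{X}(t-)}+\mathbb{1}_{\{t>\tau_m\}}\sum_k I^1_k(t-)\sum_{l\neq k}\bar\beta_{kl}(t)\lambda^1_{kl}(t)$ (with $\gamma^m$ built from $\kappa_m\lambda^1$), into \eqref{bsde-1-Y-J} turns the $P$-equation for $Y$ into a BSDE of the same form driven by $(M^0,M^{1,m})$ but carrying the extra predictable drift
\begin{equation*}
S(t):=\mathbb{1}_{\{t>\tau_m\}}\sum_k I^1_k(t-)\sum_{l\neq k}\bigl(\bar\beta_{kl}(t)+Z^1_{kl}(t)\bigr)\lambda^1_{kl}(t),
\end{equation*}
i.e.\ exactly the intensity-weighted sum-at-risk. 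The heart of this step is that the payment $\bar\beta_{kl}$ coming out of $\gamma$ and the reserve-jump $Z^1_{kl}$ coming out of the compensator of $Z^1dM^1$ combine into $\bar\beta_{kl}+Z^1_{kl}$; here the representation \eqref{Z0Z1Def} of $Z^1$ as a reserve jump is what is used.

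Because the payments in \eqref{bsde-1-Y-J} do not depend on $(Y,Z)$, the difference $\Delta:=Y-Y^m$ solves, under $P^m$, a linear BSDE with terminal value $0$ whose only inhomogeneity is $S$ (up to an overall sign), $d\Delta(t)=\bigl(-\delta(t)\Delta(t)-S(t)\bigr)dt+(Z^0-Z^{0,m})dM^0+(Z^1-Z^{1,m})dM^{1,m}$, so that $\Delta(t)$ equals the discounted $P^m$-conditional expectation of the future values of $S$; hence $\Delta\equiv0$ if and only if $S=0$ $\,dP^m\times dt$-a.e., and in that case the identities for $Z^0,Z^1$ follow from \eqref{Z0Z1Def}. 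For the ``if'' part I would take \eqref{SumAtRiskCondition}, transport it to $P^m$ using $P^m\ll P$ (Proposition~\ref{Construction_measrures_Pm}(c)), and note that $\tau_{m+1}=+\infty$ $P^m$-a.s.\ makes the indicator $\mathbb{1}_{\{\tau_m<t\le\tau_{m+1}\}}$ there agree $dP^m\times dt$-a.e.\ with the $\mathbb{1}_{\{t>\tau_m\}}$ appearing in $S$; this gives $S=0$, hence $\Delta\equiv0$ $P^m$-a.s., and $Y=Y^m$ on $[0,\tau_{m+1})$ $P$-a.s.\ then follows from $P^m\sim P$ on $\mathcal{F}_{\tau_{m+1}-}$ (Proposition~\ref{Construction_measrures_Pm}(b)). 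The ``only if'' part reverses the chain: the hypothesis transfers to $P^m$-a.s.\ equality on $[0,T]$ by (b), comparison of the predictable finite-variation parts of the two $P^m$-decompositions of $Y\,(=Y^m)$ forces $S=0$ $\,dP^m\times dt$-a.e., and (b) carries this back to $dP\times dt$-a.e.\ on $(\tau_m,\tau_{m+1}]$.

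The step I expect to be most delicate is the measure bookkeeping rather than any single estimate. Concretely, one must (i) justify that rewriting the $P$-equation as a $P^m$-equation is legitimate, i.e.\ that the $P$-a.s.\ dynamics of $(Y,Z^0,Z^1)$ survive $P^m$-a.s.\ (this is where $P^m\ll P$ on $\mathcal{F}_\infty$ enters) and that $(Y,Z^0,Z^1)$ has enough $P^m$-integrability for the linear BSDE, which is not automatic because $dP^m/dP$ need not be bounded; if necessary I would localise by stopping at $\tau_{m+1}$, where the measures are still equivalent. And (ii) one must repeatedly move almost-sure and $dt$-a.e.\ statements between $P$ and $P^m$, always invoking equivalence strictly before the $(m+1)$-st jump and observing that $\tau_{m+1}$ and $\tau_{m+1}-$ differ only on a $dt$-null set, so the distinction between $\{\tau_m<t<\tau_{m+1}\}$ and $\{\tau_m<t\le\tau_{m+1}\}$ in \eqref{SumAtRiskCondition} is immaterial. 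Properties (a)–(c) of Proposition~\ref{Construction_measrures_Pm} are precisely what make (ii) work.
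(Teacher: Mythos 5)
Your proposal is correct and follows essentially the same route as the paper's proof: both rest on the observation that under $P^m$ the $(m+1)$-st modification never occurs (so the BSDE holds on all of $[0,T]$), both identify the difference between the $P$- and $P^m$-dynamics as exactly the intensity-weighted sum-at-risk drift $\sum_{l\neq k}(\bar\beta_{kl}+Z^1_{kl})I^1_k(t-)\lambda^1_{kl}(t)\mathbb{1}_{\{t>\tau_m\}}$, both get the ``if'' direction from $P^m$-uniqueness of the BSDE solution and the ``only if'' direction from comparing the finite-variation parts, and both transfer conclusions between measures via $P^m\sim P$ on $\mathcal{F}_{\tau_{m+1}-}$ and $P^m\ll P$. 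Your explicit handling of the $P^m$-integrability of $(Y,Z^0,Z^1)$ (localising at $\tau_{m+1}$, since $dP^m/dP$ need not be bounded) is a point the paper's terser proof leaves implicit, but it is a refinement of the same argument, not a different one.
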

\begin{proof}
By construction, the $P^m$-compensator of $N^1$ is zero on $(\tau_{m},\infty)$, which implies that $P^m(\tau_{m+1} < \infty) =0$.

If  \eqref{SumAtRiskCondition} holds, then the definition of $P^m$ implies that  $(Y,Z^0,Z^1)$ and $(Y^m,Z^{0,m},Z^{1,m})$ solve the same BSDE on $[0,\tau_{m+1})$, thus they are $P^m$-almost surely equal on $[0,T]$, using that $P^m(T < \tau_{m+1})=1$ and  the $P^m$-  uniqueness of the BSDE solution  $(Y^m,Z^{0,m},Z^{1,m})$. Moreover, since  $P^m|_{\mathcal{F}_{\tau_{m+1}-}} \sim P|_{\mathcal{F}_{\tau_{m+1}-}}$,  the processes $(Y,Z^0,Z^1)$ and $(Y^m,Z^{0,m},Z^{1,m})$ are $P$-a.s.~equal on $[0,\tau_{m+1})$.

On the other hand, if we know that $(Y(t),Z^0(t),Z^1(t))=(Y^m(t),Z^{0,m}(t),Z^{1,m}(t))$ for all $t < \tau_{m+1}$, then the difference on  $[0, \tau_{m+1})$ of the corresponding BSDEs is of the form
\begin{align*}
    0 =   (1-\kappa_m(t)) \sum_{k,l:l\neq k} \Big( \bar{\beta}_{kl}(t) + Z^1_{kl}(t) \Big) I^1_k(t-) \lambda^1_{kl}(t) dt .
\end{align*}
Since the events $\{J(t-)=k\}$, $k \in \mathcal{J}$, are mutually exclusive, the latter equation is equivalent to condition \eqref{SumAtRiskCondition}.
\end{proof}

Since $\tau_{m+1} > 0$ for all $m \in \mathbb{N}_0$, Theorem \ref{CantelliTheorem} describes a situation where  the net premium condition at time zero is not affected by contract modifications.

By applying Proposition \ref{bsde-1} on the extended random pattern of states $\widetilde{X}=(X,J)$, we can see that 
the factor $\bar{\beta}_{kl}(t) + Z^1_{kl}(t)$ in  \eqref{SumAtRiskCondition} represents the sum-at-risk for a transition of $J$ from $k$ to $l$ at time $t$. Hence, Theorem \ref{CantelliTheorem} is a non-Markovian generalization of the Cantelli Theorem. 

Now we are seeking to construct life insurance policies that satisfy condition \eqref{SumAtRiskCondition}. A common approach in insurance practice is to start with a given cash flow $A$ and to add adjustment factors (or scaling factors) in such a way that actuarial equivalence is maintained upon contract modifications. Based on the life insurance cash flow $A$, we define an adjusted cash flow $\hat{A}$ by 
\begin{align*}
   d\hat{A}(t)  =  \sum_{m=0}^{\infty} \mathbb{1}_{\{\tau_m <  t \leq \tau_{m+1}\}}  \rho_m\, dA(t) ,\quad \hat{A}(0)=A(0),
\end{align*}
where $\rho_0:=0$ and $\rho_m$, $m \in \mathbb{N}$, are $\mathcal{F}_{\tau_m}$-measurable random variables.  We interpret $\rho_m$ as an actuarial adjustment (or scaling) that is applied on the future life insurance cash flow upon the $m$-th contract modification.
In particular, the transition payments of $\hat{A}$ upon a jump of $J$ are of the form
\begin{align*}
 \hat{\bar{\beta}}_{kl}(t):= \sum_{m=0}^{\infty} \mathbb{1}_{\{\tau_m <  t \leq \tau_{m+1}\}}  \rho_m\,  \bar{\beta}_{kl}(t).
\end{align*}
We can represent the adjustment factors by  $\rho_m=\rho_m(\tau_m,\widetilde{X}(\tau_m))$ for mappings $\rho_m(t,(i,k))(\omega)$, $m \in \mathbb{N}$,  that are jointly measurable in $(t,(i,k),\omega) \in [0,\infty) \times (\mathcal{S} \times \mathcal{J}) \times \Omega$ and such that $\omega \mapsto \rho_m(t,(i,k))(\omega) $ is  $\mathcal{F}_{\tau_m-}$-measurable  for each $(t,(i,k))$.

In the following proposition we pretend that we have a life insurance model with up to $m$  contract modifications and known adjustment factors $\rho_1, \ldots, \rho_m$ and aim to expand the life insurance model to a maximum of $m+1$ contract modifications. We give  a condition for $\rho_{m+1}$  that implies \eqref{SumAtRiskCondition}, i.e.~the condition ensures actuarial equivalence upon the $(m+1)$-th contract modification. 
\begin{proposition} \label{LemmaSufficCondRho}
For $m \in \mathbb{N}_0$ suppose that  $(Y^{m+1}, Z^{0,m+1}, Z^{1,m+1})$ and $(\hat{Y}^{m+1}, \hat{Z}^{0,m+1}, \hat{Z}^{1,m+1})$ are  unique solutions to the BSDE \eqref{bsde-1-Y-J} w.r.t.~the life insurance cash flows $A$ and  $\hat{A}$, respectively, and w.r.t.~the probability measure $P^{m+1}$ as defined in Proposition \ref{Construction_measrures_Pm}. 
Then for each $k,l \in \mathcal{J}$, $k\neq l$, we have
\begin{align}\label{SumAtRiskIsZero}
\mathbb{1}_{\{\tau_{m} <  t \leq \tau_{m+1}\}} I^1_k(t-)  \big(\hat{\bar{\beta}}_{kl}(t) + \hat{Z}^{1,m+1}_{kl}(t)\big) =0 
\end{align}
if and only if
\begin{align}\label{rho_t_i_l_condition}
    \rho_{m+1}(t,(i,l)) 
    = \frac{\hat{Y}^{m+1}(t-) - \rho_m\, 
    \bar{\beta}_{J(t-)l}(t)}{E^{m+1}[  Y^{m+1}(t)  | \mathcal{F}_{\tau_{m+1}-}, \tau_{m+1}=t, \widetilde{X}(\tau_{m+1})=(i,l)]}, \quad \tau_m <  t \leq \tau_{m+1}, 
\end{align}
under the convention $0/0:=1$.
\end{proposition}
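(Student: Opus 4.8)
The plan is to turn the sum-at-risk condition \eqref{SumAtRiskIsZero} into an explicit scalar equation by computing $\hat{Z}^{1,m+1}_{kl}(t)$ from the martingale representation, and then to solve that equation for $\rho_{m+1}$. Throughout I work on the event $\{\tau_m < t \le \tau_{m+1}\}\cap\{X(t-)=i,\,J(t-)=k\}$, on which the jump of $J$ from $k$ to $l$ at time $t=\tau_{m+1}$ is exactly the $(m+1)$-th contract modification. Applying the $\widetilde{X}=(X,J)$-analogue of Proposition \ref{bsde-1}, that is \eqref{Z0Z1Def}, under $P^{m+1}$ and for the adjusted cash flow $\hat{A}$, on this event
$$\hat{Z}^{1,m+1}_{kl}(t) = E^{m+1}[\hat{Y}^{m+1}(t)\,|\,\mathcal{F}_{t-},\widetilde{X}(t)=(i,l)] - E^{m+1}[\hat{Y}^{m+1}(t)\,|\,\mathcal{F}_{t-},\widetilde{X}(t)=(i,k)],$$
so that the sum-at-risk $\hat{\bar{\beta}}_{kl}(t)+\hat{Z}^{1,m+1}_{kl}(t)$ splits into a ``jump'' term, a ``no-jump'' term, and the transition payment, which I evaluate one by one.

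For the jump term I use the defining property of $P^{m+1}$ from Proposition \ref{Construction_measrures_Pm}: after $\tau_{m+1}$ the $P^{m+1}$-intensities of $J$ vanish, hence $P^{m+1}(\tau_{m+2}<\infty)=0$ and $J$ is frozen on $(\tau_{m+1},T]$. Consequently $d\hat{A}=\rho_{m+1}\,dA$ on $(\tau_{m+1},T]$, so on $\{\tau_{m+1}=t,\ \widetilde{X}(t)=(i,l)\}$ the post-modification reserve satisfies $\hat{Y}^{m+1}(t)=\rho_{m+1}\,Y^{m+1}(t)$, pulling the $\mathcal{F}_t$-measurable factor $\rho_{m+1}$ out of the defining conditional expectation of $\hat{Y}^{m+1}$. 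Since $\rho_{m+1}=\rho_{m+1}(t,(i,l))$ is $\mathcal{F}_{\tau_{m+1}-}$-measurable and $\mathcal{F}_{\tau_{m+1}-}=\mathcal{F}_{t-}$ under the conditioning, it factors out once more, and the jump term equals $\rho_{m+1}(t,(i,l))\,E^{m+1}[Y^{m+1}(t)\,|\,\mathcal{F}_{\tau_{m+1}-},\tau_{m+1}=t,\widetilde{X}(\tau_{m+1})=(i,l)]$, i.e.\ $\rho_{m+1}$ times the denominator of \eqref{rho_t_i_l_condition}. For the no-jump term I note that $\widetilde{X}(t)=(i,k)=\widetilde{X}(t-)$ is precisely the event that neither $X$ nor $J$ jumps at $t$; on that event $\hat{Y}^{m+1}$ carries no jump and equals its left limit, whence this term equals the $\mathcal{F}_{t-}$-measurable quantity $\hat{Y}^{m+1}(t-)$. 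Finally $\hat{\bar{\beta}}_{kl}(t)=\rho_m\,\bar{\beta}_{kl}(t)=\rho_m\,\bar{\beta}_{J(t-)l}(t)$ on $\{\tau_m<t\le\tau_{m+1}\}$ straight from the definition of $\hat{A}$.

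Assembling the three pieces, condition \eqref{SumAtRiskIsZero} reads
$$\rho_m\,\bar{\beta}_{J(t-)l}(t) + \rho_{m+1}(t,(i,l))\,E^{m+1}[Y^{m+1}(t)\,|\,\mathcal{F}_{\tau_{m+1}-},\tau_{m+1}=t,\widetilde{X}(\tau_{m+1})=(i,l)] - \hat{Y}^{m+1}(t-)=0,$$
and dividing by the conditional expectation yields exactly \eqref{rho_t_i_l_condition}; conversely, substituting \eqref{rho_t_i_l_condition} reproduces the displayed equation, which gives the claimed equivalence. The degenerate case is covered by the convention $0/0:=1$: when the denominator vanishes the jump term is $0$ irrespective of $\rho_{m+1}$, so \eqref{SumAtRiskIsZero} reduces to $\hat{Y}^{m+1}(t-)=\rho_m\bar{\beta}_{J(t-)l}(t)$ (numerator zero), consistent with the convention, whereas a non-vanishing numerator renders both sides inconsistent. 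Since $P^{m+1}\sim P$ on $\mathcal{F}_{\tau_{m+1}-}$, all of these a.s.\ identities may be read interchangeably under $P$ or $P^{m+1}$ on the relevant region.

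I expect the main obstacle to be the two conditional-expectation identifications of the second paragraph, carried out rigorously at the random time $\tau_{m+1}$: the factorization of $\rho_{m+1}$ out of the jump term (measurability together with the $J$-freezing property of $P^{m+1}$) and, above all, the identity ``no-jump conditional expectation $=$ left limit $\hat{Y}^{m+1}(t-)$'', which requires that $\hat{Y}^{m+1}$ has no jump on the no-transition event (care is needed should a deterministic lump-sum time of $\nu$ coincide with $\tau_{m+1}$, a null event one should exclude). It is worth emphasizing that \eqref{rho_t_i_l_condition} is only an \emph{implicit} characterization, since $\hat{Y}^{m+1}(t-)$ itself depends on $\rho_{m+1}$; the equivalence asserted here is the algebraic identity, while the existence of a consistent $\rho_{m+1}$ is a separate fixed-point/recursion matter.
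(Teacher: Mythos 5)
Your proof is correct and takes essentially the same route as the paper's: the identical three-way split of the sum-at-risk into the jump term (where $\rho_{m+1}$ factors out using the $J$-freezing property of $P^{m+1}$, i.e.\ $\hat{Y}^{m+1}(t)=\rho_{m+1}Y^{m+1}(t)$ on the modification event), the no-jump term identified with $\hat{Y}^{m+1}(t-)$, and the transition payment $\rho_m\,\bar{\beta}_{kl}(t)$, followed by solving the resulting scalar equation for $\rho_{m+1}$. The paper handles your lump-sum caveat exactly as you suggest, noting that the Lebesgue-density of the compensator of $N^1$ gives $P^{m+1}(\Delta\nu(\tau_{m+1})=1)=0$.
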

\begin{proof}
By arguing analogously to the proof of Proposition \ref{bsde-1}, we can show that the left hand side of \eqref{SumAtRiskIsZero} almost surely equals
\begin{align*}
      &\mathbb{1}_{\{\tau_m <  t \leq \tau_{m+1}\}} I^1_k(t-)\sum_i I^0_i(t-) \bigg(  \hat{\bar{\beta}}_{kl}(t) + E^{m+1}[\hat{\zeta}(t-)| \mathcal{F}_{\tau_{m+1}-}, \tau_{m+1}=t, \widetilde{X}(\tau_{m+1})=(i,l) ] \\
      &\qquad \qquad \qquad \qquad \qquad - \frac{E^{m+1}[ \hat{\zeta}(t-) \mathbb{1}_{\{\widetilde{X}(t)=\widetilde{X}(t-)\}} | \mathcal{F}_{t-}] }{E^{m+1}[  \mathbb{1}_{\{\widetilde{X}(t)=\widetilde{X}(t-)\}} | \mathcal{F}_{t-}]}\bigg)
\end{align*}
for
\begin{align*}
   \hat{\zeta}(t):= \int_{(t,T]} e^{-\int_t^s \delta(u)du}\sum_m \mathbb{1}_{\{\tau_m <  t \leq \tau_{m+1}\}}   \rho_m(\tau_m,\widetilde{X}(\tau_m)) \Big( \gamma_{\widetilde{X}(s-)}(s)\,ds  +a_{\widetilde{X}(s-)}(s)\, d\nu(s)\Big).
 \end{align*}
Since the compensators of $N^0$ and $N^1$ have Lebesgue-densities, we can show that 
\begin{align*}
    \frac{E^{m+1}[ \hat{\zeta}(t-) \mathbb{1}_{\{\widetilde{X}(t)=\widetilde{X}(t-)\}} | \mathcal{F}_{t-}] }{E^{m+1}[  \mathbb{1}_{\{\widetilde{X}(t)=\widetilde{X}(t-)\}} | \mathcal{F}_{t-}]} = E^{m+1}[ \hat{\zeta}(t-)  | \mathcal{F}_{t-}]=\hat{Y}^{m+1}(t-).
\end{align*}
On the other hand,  since 
$$\hat{\zeta} (\tau_{m+1}-)=  \rho_{m+1}(\tau_{m+1},\widetilde{X}(\tau_{m+1})) \,\zeta (\tau_{m+1}) + \rho_{m}\, a_{\widetilde{X}(\tau_{m+1}-)}(\tau_{m+1})\,\Delta v(\tau_{m+1})$$ 
$P^{m+1}$-almost surely, and since the existence of Lebesgue-densities for the compensators of $N^1$ implies $P^{m+1}(\Delta \nu(\tau_{m+1})=1)=0$, we obtain
\begin{align*}
    &E^{m+1}[\hat{\zeta}(t-)| \mathcal{F}_{\tau_{m+1}-}, \tau_{m+1}=t, \widetilde{X}(\tau_{m+1})=(i,l) ] \\
    & = \rho_{m+1}(t,(i,l))  \, E^{m+1}[ Y^{m+1}(t)  | \mathcal{F}_{\tau_{m+1}-}, \tau_{m+1}=t, \widetilde{X}(\tau_{m+1})=(i,l) ].
\end{align*}
Altogether, we can conclude that \eqref{SumAtRiskIsZero} is equivalent to
\begin{align*}
0= &\mathbb{1}_{\{\tau_{m} <  t \leq \tau_{m+1}\}} I^1_k(t-) \Big( \rho_m\,\bar{\beta}_{kl}(t) 
-\hat{Y}^{m+1}(t-) \\
& \qquad \qquad+ \rho_{m+1}(t,(i,l))   E^{m+1}[ Y^{m+1}(t)  | \mathcal{F}_{\tau_{m+1}-}, \tau_{m+1}=t, \widetilde{X}(\tau_{m+1})=(i,l) ] \Big).
\end{align*}
\end{proof}

In formula \eqref{rho_t_i_l_condition}, $\hat{Y}^{m+1}(t-)$ gives the policyholders wealth just before a contract modification, from which we deduct the modification lump sum payment $ \rho_m \bar{\beta}_{J(t-)l}(t)$. The denominator is the value of the new contract before  actuarial adjustments. 

Suppose for the moment that we have a life insurance model where at most one contract modification can occur at time $\tau:=\tau_1$, i.e.~we have $P=P^1$.  In the situation of Proposition \ref{LemmaSufficCondRho} we obtain then that  the adjustment factor $\rho:= \rho_1$ can be represented as 
\begin{align*}
\rho = f(\tau,\hat{Y}(\tau-)) 
\end{align*}
for a jointly measurable mapping $f$ such that $f(t,y)$ is $\mathcal{F}_{t-}$-adapted for each $t> 0$. As a result, the process  $(\hat{Y},\hat{Z}^0,\hat{Z}^1)$ corresponds to a nonlinear BSDE of the form \eqref{bsde-1-Y-J} but with payment process 
\begin{align*}
    \hat{\gamma}_{\widetilde{X}(t-)}(t) = \mathbb{1}_{\{t <\tau\}} \gamma_{\widetilde{X}(t-)}(t)+  \mathbb{1}_{\{t \geq \tau\}}f(\tau,\hat{Y}(\tau-)) \gamma_{\widetilde{X}(t-)}(t), \quad t \geq 0.
\end{align*}
Unfortunately, Proposition \ref{Nonlinear-NonMarkovian-StochThiele} does not apply here since for $t > \tau$ the process $\gamma_{\widetilde{X}(t-)}(t)$ depends on the further past of  $\hat{Y}$ via $\hat{Y}(\tau)$  rather than $\hat{Y}(t-)$.  In the literature we can find existence and uniqueness results also for BSDEs of such kind, see e.g.~Cheridito \& Nam (2017), but they come with very restrictive Lipschitz assumptions that are usually not satisfied in our setting. Therefore,  we present now an alternative way for calculating the adjustment factors $\rho_m$, $m\in \mathbb{N}$.

\begin{theorem}[Recursive calculation of adjustment factors]\label{RecursiveCalcTheorem}
For each $m \in \mathbb{N}_0$ let $(Y^m, Z^{0,m}, Z^{1,m})$ and $(\hat{Y}^m, \hat{Z}^{0,m}, \hat{Z}^{1,m})$ be the unique solutions of the BSDE \eqref{bsde-1-Y-J} under the probability measure $P^m$ w.r.t.~the life insurance cash flows $A$ and $\hat{A}$, respectively.
If
\begin{align}\label{RecEqForRho}
    \rho_{m+1}= \frac{\hat{Y}^{m}(\tau_{m+1}-)- \rho_{m}\, \beta^1_{J(\tau_{m})J(\tau_{m+1})}(\tau_{m+1})}{Y^{m+1}(\tau_{m+1})}, \quad m \in \mathbb{N}_0,
\end{align}
 then 
\begin{align*}
    \hat{Y}(t) = \hat{Y}^m(t), \quad  t < \tau_{m+1},\, m \in \mathbb{N}_0, 
\end{align*}
where $(\hat{Y}, \hat{Z}^{0}, \hat{Z}^{1})$ is the unique solution of the BSDE \eqref{bsde-1-Y-J} under the probability measure $P$ w.r.t.~the life insurance cash flows $\hat{A}$.
\end{theorem}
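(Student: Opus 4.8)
The plan is to establish the equivalent chain of identities $\hat{Y}^{m+1}=\hat{Y}^{m}$ on $[0,\tau_{m+1})$ for every $m$ and then to pass to the full measure $P$. The conceptual key is that the recursion \eqref{RecEqForRho} is the \emph{decoupled} form of the sum-at-risk condition in Proposition \ref{LemmaSufficCondRho}: equation \eqref{rho_t_i_l_condition} determines $\rho_{m+1}$ implicitly through $\hat{Y}^{m+1}(\tau_{m+1}-)$, which itself depends on $\rho_{m+1}$, whereas \eqref{RecEqForRho} feeds in the $\rho_{m+1}$-independent quantity $\hat{Y}^{m}(\tau_{m+1}-)$ (well defined $P$-a.s.\ because $P^m\sim P$ on $\mathcal{F}_{\tau_{m+1}-}$ by Proposition \ref{Construction_measrures_Pm}(b)). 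Justifying this substitution is exactly where the circularity must be broken, and I would do so by verification rather than by invoking \eqref{rho_t_i_l_condition} directly.

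First I would carry out a pasting step under $P^{m+1}$. Since the intensity $\kappa_{m+1}\lambda^1$ vanishes after $\tau_{m+1}$, no further modification occurs on $(\tau_{m+1},T]$, reserving there is linear, and hence $\hat{Y}^{m+1}=\rho_{m+1}Y^{m+1}$ on $[\tau_{m+1},T]$; in particular $\hat{Y}^{m+1}(\tau_{m+1})=\rho_{m+1}Y^{m+1}(\tau_{m+1})$ (using $\Delta\nu(\tau_{m+1})=0$ a.s.). I would then define the candidate $\bar{Y}:=\hat{Y}^{m}$ on $[0,\tau_{m+1})$ and $\bar{Y}:=\rho_{m+1}Y^{m+1}$ on $[\tau_{m+1},T]$, with the natural candidate coefficients, and check that it solves the $P^{m+1}$-BSDE \eqref{bsde-1-Y-J} driven by $\hat{A}$ on all of $[0,T]$. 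By \eqref{RecEqForRho} the jump of $\bar{Y}$ at $\tau_{m+1}$ equals $\rho_{m+1}Y^{m+1}(\tau_{m+1})-\hat{Y}^{m}(\tau_{m+1}-)=-\hat{\bar{\beta}}_{J(\tau_m)J(\tau_{m+1})}(\tau_{m+1})$, so the sum-at-risk $\hat{\bar{\beta}}_{kl}+\bar{Z}^1_{kl}$ vanishes for each potential transition. This serves on two counts: at $\tau_{m+1}$ it reproduces the jump prescribed by the $M^1$-term, and on $(\tau_m,\tau_{m+1})$ it cancels the additional $J$-compensator drift (proportional to the sum-at-risk) by which the $P^{m+1}$-dynamics exceed the $P^{m}$-dynamics satisfied by $\hat{Y}^m$. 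Thus $\bar{Y}$ solves the same linear BSDE as $\hat{Y}^{m+1}$, and uniqueness of its solution (presupposed in the statement, cf.\ Proposition \ref{bsde-1}) forces $\bar{Y}=\hat{Y}^{m+1}$; restricting to $[0,\tau_{m+1})$ gives $\hat{Y}^{m+1}=\hat{Y}^{m}$ there. Crucially, this never presupposes the equality it proves.

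To reach the stated conclusion I would then chain and take a limit. The identities just obtained give $\hat{Y}^{n}=\hat{Y}^{m}$ on $[0,\tau_{m+1})$ for all $n\ge m$, since $[0,\tau_{m+1})\subseteq[0,\tau_{n+1})$. By assumption (A5) only finitely many jumps of $J$ occur on $[0,T]$, $P$-a.s., so for each $\omega$ there is $N(\omega)$ with $\tau_{N}>T$, whence $\kappa_{N}\equiv1$ on $[0,T]$ and $P^{N}$ agrees with $P$ on $\mathcal{F}_{T}$ by Proposition \ref{Construction_measrures_Pm}(a). On that event the $P^{N}$- and $P$-BSDEs coincide, so $\hat{Y}^{N}=\hat{Y}$ on $[0,T]$; combining with the chained identity yields $\hat{Y}=\hat{Y}^{m}$ on $[0,\tau_{m+1})$, as claimed. (Alternatively one could invoke Theorem \ref{CantelliTheorem} for $P$ versus $P^{m}$, but only after transferring the sum-at-risk identity from $P^{m+1}$ to $P$, which again uses $P^{m+1}=P$ on $\mathcal{F}_{\tau_{m+1}}$.)

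I expect the main obstacle to be the verification step, precisely because of the circular appearance of $\hat{Y}^{m+1}$ in \eqref{rho_t_i_l_condition}: the temptation is to substitute the recursion into Proposition \ref{LemmaSufficCondRho}, but this is not legitimate until one already knows $\hat{Y}^{m+1}(\tau_{m+1}-)=\hat{Y}^{m}(\tau_{m+1}-)$, which is itself the output of the argument. Building $\bar{Y}$ from the $\rho_{m+1}$-independent data $\hat{Y}^m$ and $Y^{m+1}$ and letting BSDE uniqueness certify that the sum-at-risk vanishes is what resolves this. Secondary care is needed with the three competing measures $P^m,P^{m+1},P$: every evaluation at $\tau_{m+1}-$ must be justified through the equivalence and absolute-continuity relations of Proposition \ref{Construction_measrures_Pm}, and the degenerate case $Y^{m+1}(\tau_{m+1})=0$ handled by the convention $0/0:=1$ inherited from Proposition \ref{LemmaSufficCondRho}.
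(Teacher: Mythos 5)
Your proposal is correct, and at the decisive step it takes a genuinely different route from the paper. The paper's own proof first asserts, as a consequence of the absolute continuity of the compensators, the pathwise identity $\hat{Y}^{m+1}(\tau_{m+1}-)=\hat{Y}^{m+1}(\tau_{m+1})+\rho_m\,\beta^1_{J(\tau_m)J(\tau_{m+1})}(\tau_{m+1})$, combines it with $\hat{Y}^{m+1}(\tau_{m+1})=\rho_{m+1}Y^{m+1}(\tau_{m+1})$ and with \eqref{RecEqForRho} to conclude $\hat{Y}^{m+1}(\tau_{m+1}-)=\hat{Y}^{m}(\tau_{m+1}-)$, so that \eqref{RecEqForRho} coincides with the implicit condition \eqref{rho_t_i_l_condition}; it then quotes Proposition \ref{LemmaSufficCondRho} to obtain \eqref{SumAtRiskIsZero}, applies Theorem \ref{CantelliTheorem} to get $\hat{Y}^{m+1}=\hat{Y}^m$ on $[0,\tau_{m+1})$, and finishes with the same chaining-and-stabilization argument you give. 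You bypass Proposition \ref{LemmaSufficCondRho} and Theorem \ref{CantelliTheorem} entirely: you build the pasted candidate from the $\rho_{m+1}$-independent data $\hat{Y}^m$ and $Y^{m+1}$, verify that it solves the $P^{m+1}$-BSDE \eqref{bsde-1-Y-J} for $\hat{A}$ (jump matching at $\tau_{m+1}$ is exactly \eqref{RecEqForRho}, drift cancellation on $(\tau_m,\tau_{m+1})$ is exactly the vanishing sum at risk), and let uniqueness identify it with $(\hat{Y}^{m+1},\hat{Z}^{0,m+1},\hat{Z}^{1,m+1})$. The trade-off is substantive: the paper's route is shorter because it reuses its two prior results, but it hinges on the asserted pathwise jump identity, which is not a consequence of absolute continuity alone --- for a general adjusted cash flow the reserve's jump at $\tau_{m+1}$ differs from the jump payment by precisely the sum at risk, so the identity holds a.s.\ at each fixed deterministic $t$ but transferring it to the stopping time $\tau_{m+1}$ is essentially equivalent to the statement being proved; your verification argument produces that identity as an \emph{output} of uniqueness rather than taking it as an input, which is exactly the non-circularity you emphasize. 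The cost on your side is that the candidate coefficients (e.g.\ $\bar{Z}^1_{kl}=-\hat{\bar{\beta}}_{kl}$ on $(\tau_m,\tau_{m+1}]$, $\bar{Z}^0=\hat{Z}^{0,m}$ there, $\rho_{m+1}Z^{0,m+1}$ afterwards) must be exhibited with their predictability and integrability, which your sketch leaves implicit --- though at roughly the paper's own level of detail; your closing step (finitely many jumps by (A5), stabilization, identification with the $P$-solution) and your measure bookkeeping via Proposition \ref{Construction_measrures_Pm} match the paper's.
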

\begin{proof}
The absolute continuity of the compensators of the counting processes $(N^0,N^1)$ implies that $P(\Delta \nu(\tau_{m+1})=1)=0$ and
\begin{align*}
    \hat{Y}^{m+1}(t-)&=  \hat{Y}^{m+1}(t) +  \sum_{n} \mathbb{1}_{\{\tau_n < t \leq \tau_{n+1}\}}  \rho_n \Big(   a_{\widetilde{X}(t-)}(t)\, \Delta v(t) +\sum_{k,l:k\neq l} \beta^1_{kl}(t)\, \Delta N^1_{kl} (t)\Big).
\end{align*}
These two facts and $\hat{Y}^{m+1}(\tau_{m+1})=\rho_{m+1} Y^{m+1}(\tau_{m+1})$ yield
\begin{align*}
    \hat{Y}^{m+1}(\tau_{m+1}-)&=  \rho_{m+1} Y^{m+1}(\tau_{m+1}) +\rho_{m} \beta^1_{J(\tau_{m})J(\tau_{m+1})}(\tau_{m+1}).
\end{align*}
Moreover, replacing $\rho_{m+1}$ by \eqref{RecEqForRho} leads to 
\begin{align*}
    \hat{Y}^{m+1}(\tau_{m+1}-)&=  \hat{Y}^{m}(\tau_{m+1}-).
\end{align*}
Thus, condition \eqref{RecEqForRho} is equivalent to 
\begin{align}\label{RecEqForRho2}
    \rho_{m+1}= \frac{\hat{Y}^{m+1}(\tau_{m+1}-)- \rho_{m}\, \beta^1_{J(\tau_{m})J(\tau_{m+1})}(\tau_{m+1})}{Y^{m+1}(\tau_{m+1})}, \quad m \in \mathbb{N}_0.
\end{align}
Hence, $\hat{A}$ is a cash flow whose adjustment factors can be represented as $\rho_{m+1}=\rho_{m+1}(\tau_{m+1},\widetilde{X}(\tau_{m+1}))$, $ m\in \mathbb{N}_0$, for mapping $\rho_{m+1}(t,(i,l)$ defined as by \eqref{rho_t_i_l_condition}. According to Proposition \ref{LemmaSufficCondRho} equation \eqref{SumAtRiskIsZero} holds, and by applying Theorem \ref{CantelliTheorem} we obtain  $\hat{Y}^{m+1}(t)= \hat{Y}^{m}(t)$ for all $t \in [0, \tau_{m+1})$. In particular, this implies that  $\hat{Y}^k(t)= \hat{Y}^{m+1}(t)$  on  $[0, \tau_{m+1})$ for all $k \geq m+1$.
Because of Assumption (A5) the paths of $N^1$ have at most finitely many jumps on $[0,T]$. Thus, for almost each $\omega \in \Omega$ there exists an $n_0$ such that $\tau_{n_0}(\omega) > T$. Hence, for almost all $\omega \in \Omega$ the sequence $(\hat{Y}^n(\omega),\hat{Z}^{0,n}(\omega),\hat{Z}^{1,n}(\omega))_n $ is for $n \geq n_0$, so it converges almost surely for $n \rightarrow \infty$ to a limit  $(Y^*, Z^{0,*}, Z^{1,*})$. This  limit equals $(\hat{Y}, \hat{Z}^{0}, \hat{Z}^{1})$ since it solves the same BSDE.
\end{proof}

While in  \eqref{rho_t_i_l_condition} the right hand side depends on $\hat{Y}^{m+1}$, in  \eqref{RecEqForRho} we just need to know $\hat{Y}^{m}$.  This allows for a recursive calculation of the adjustment factors, starting at $m=0$. In the $m$-th recursion step we calculate $\hat{Y}^{m}$ by solving a BSDE that is linear, since $\rho_0, \ldots, \rho_m$ are already known from the previous recursion steps.  

Theorem \ref{RecursiveCalcTheorem} implies that $\hat{Y}$, $\hat{Y}^m$ and $\hat{Y}^0$ are identical at time zero  for all $m \in \mathbb{N}_0$, which means that  the net premium condition at time zero is unaffected by the contract modifications.

\begin{example}\label{Example43}
 Consider the survival model suggested in Buchardt \& M{\o}ller (2015), Section 3, where the state $X$ of the policyholder takes values in $\cS=\{0,1,2\}$ where 0=alive, 1=dead and 2=surrender. Let $\mathcal{J}=\{0,1\}$ where $0$ stands for a standard policy mode and  $1$ denotes the free policy mode.

$\bullet$ When the policy is in mode $0$, the payments consist of a benefit rate $b(t)$, a premium rate $\pi(t)$ and a payment $b_{ad}(t)$ upon death at time $t$, i.e.~we have
$$
\a_{(0,0)}(t)=b(t)-\pi(t),\quad \b^0_{01}(t,0)=b_{ad}(t).$$
Payment upon surrender at time $t$ is
\begin{equation}\label{surr-1}
\b^0_{02}(t,0)=(1-\kappa)Y^{0}(t),
\end{equation}
where $\kappa$ is a given constant in $[0,1]$.
Therefore,
$$
\gamma_{(0,0)}(t)=b(t)-\pi(t)+\l^0_{01}(t,0)b_{ad}(t)+\l^0_{02}(t,0)(1-\kappa)Y^{0}(t).
$$

\medskip
$\bullet$ When the policy is in mode $1$, the free policy regime, the premiums $\pi(t) $ are  waived and the benefits are  reduced by the adjustment factor $\rho_1$,
$$
\gamma_{(0,1)}(t)=\rho_1 \left(b(t)+\l^0_{01}(t,1)b_{ad}(t)+\l^0_{02}(t,1)(1-\kappa)Y^{1}(t)\right),
$$
where the third addend in the bracket is the payment upon surrender $\b^0_{12}(t,1)=(1-\kappa)Y^{1}(t)$ in mode $1$.

Following Buchardt \& M{\o}ller (2015), the adjustment factor $\rho_1$ is determined by
$$
\rho_1 =\frac{Y^{0}(\tau_1)}{Y^{1}(\tau_1)},
$$
which is equivalent to \eqref{RecEqForRho} since the payments processes $\bar\b_{kl}$ are zero here and the process $Y^0$ has  continuous paths. 
Consequently, according to Theorem \ref{RecursiveCalcTheorem},  $Y^0$ equals $Y^1$ on $[0,\tau_1)$. In particular, if $Y^0$ satisfies the net premium condition at time zero, then $Y^1$ satisfies it as well.
\end{example}

\begin{bibdiv}
\begin{biblist}

\bib{Buch1}{article}{
   author={Buchardt, Kristian},
   author={M{\o}ller, Thomas},
   author={Schmidt, Kristian Bjerre},
   title={Cash flows and policyholder behaviour in the semi-Markov life insurance setup},
  journal={Scandinavian Actuarial Journal},
  volume={2015},
  number={8},
  pages={660--688},
  year={2015},
  publisher={Taylor \& Francis}

}
\bib{Buch2}{article}{
   author={Buchardt, Kristian},
   author={M{\o}ller, Thomas},
title={Life insurance cash flows with policyholder behavior},
journal={Risks},
  volume={3},
  number={3},
  pages={290--317},
  year={2015},
  publisher={Multidisciplinary Digital Publishing Institute}
}
\bib{bremaud}{book}{
    author={Br\`emaud, Pierre},
   title={Point Processes and Queues: Martingale Dynamics},
  year={1981},
  publisher={Springer-Verlag, Berlin},
}

\bib{CN17}{article}{
   author={Cheridito, Patrick},
   author={Nam, Kihun},
title={BSE’s, BSDE’s and fixed-point problems.},
journal={The Annals of Probability},
  volume={45},
  number={6},
  pages={3795--3828},
  year={2017}
}

\bib{CM75}{book}{
  author={Chou, Ching-Sung},
  author={Meyer, Paul-Andr\'{e}},
  title={Sur la repr\'{e}sentation des martingales comme int\'{e}grales stochastiques dans les processus ponctuels. In: S\'{e}minaire de Probabilit\'{e}s IX Universit\'{e} de Strasbourg (pp. 226-236)},
  year={1975},
  publisher={Springer, Berlin, Heidelberg},
}

 \bib{Chr2012}{article}{
   author={Christiansen, Marcus C.},
   title={Multistate models in health insurance},
   journal={Advances in Statistical Analysis},
   volume={96(2)},
   date={2012},
   pages={155--186},
 }

 \bib{Cohen2015}{book}{
  author={Cohen, S.N.},
  author={Elliott, R.J.},
  title={Stochastic calculus and applications},
  year={2015},
  publisher={Birkh{\"a}user}
  }

  \bib{Cohen2012}{article}{
   author={Cohen, S.N.},
   author={Elliott, R.J.},
   title={Existence, Uniqueness and Comparisons for BSDEs in General Spaces},
   journal={Annals of Probability},
   volume={40(5)},
   date={2012},
   pages={ 2264--2297},
 }

\bib{CDD}{article}{
title={Reserve-dependent benefits and costs in life and health insurance contracts},
   author={Christiansen, Marcus C.},
  author={Denuit, Michel M.},
  author={Dhaene, Jan},
  journal={Insurance: Mathematics and Economics},
  volume={57},
  pages={132--137},
  year={2014},
  publisher={Elsevier}
}

\bib{DL}{article}{
  title={Nonlinear reserving in life insurance: Aggregation and mean-field approximation},
  author={Djehiche, Boualem},
  author={L{\"o}fdahl, Bj{\"o}rn},
  journal={Insurance: Mathematics and Economics},
  volume={69},
  pages={1--13},
  year={2016},
  publisher={Elsevier}
}	
\bib{Elliott76}{article}{
   author={Elliott, Robert J.},
   title={Stochastic integrals for martingales of a jump process with partially
accessible jump times},
  journal={Z. Wahrscheinlichkeitstheorie ver. Geb},
  volume={36},
  pages={213--266},
  year={1976}
}

\bib{Gat09}{article}{
   author={Gatzert, Nadine},
   title={Implicit options in life insurance: An overview},
journal={Zeitschrift f{\"u}r die gesamte Versicherungswissenschaft},
year={2009},
volume={98},
number={2},
pages={141--164},
}

\bib{HNSS}{article}{
   author={Henriksen, L.F.B.},
   author={Nielsen, J.W.},
   author={Steffensen, M.},
   author={Svensson, C.},
   title={Markov chain modeling of policyholder behavior in life
insurance and pension},
   journal={European Actuarial Journal},
   volume={4},
   date={2014},
   pages={1--29},
 }

\bib{LB}{book}{
    author={Last, G\"{u}nter},
    author={Brandt, Andreas},
        title={Marked Point Processes on the Real Line -- The Dynamic Approach},
  year={1995},
  publisher={Springer},
}

\bib{CMMoller}{article}{
author={M{\o}ller, Christian Max},
title={A stochastic version of Thiele's differential equation},
 journal={Scandinavian Actuarial Journal},
  volume={1},
  pages={1--16},
year={1993}
 }

\bib{MS1997}{article}{
   author={Milbrodt, Hartmut},
   author={Stracke, Andrea},
   title={Markov models and Thiele's integral equations for the prospective reserve},
   journal={Insurance: Mathematics and Economics},
  volume={19},
  number={3},
  pages={187--235},
  year={1997}
}

\bib{moller-steffensen}{book}{
author={M{\o}ller, Thomas},
author={Steffensen, Mogens},
title={Market-valuation methods in life and pension insurance},
year={2007},
publisher={Cambridge University Press}
}

 \bib{Norberg1991}{article}{
   author={Norberg, Ragnar},
   title={Reserves in life and pension insurance},
   journal={Scandinavian Actuarial Journal},
  volume={1991},
  number={1},
  pages={3--24},
  year={1991},
  publisher={Taylor \& Francis}
}
\bib{Norberg1992}{article}{
 author={Norberg, Ragnar},
 title={Hattendorff's theorem and Thiele's differential equation generalized},
 journal={Scandinavian Actuarial Journal},
 volume={1992},
 number={1},
 pages={2--14},
 year={1992},
  publisher={Taylor \& Francis}
}

\bib{Norberg2005}{article}{
  author={Norberg, Ragnar},
  title={Anomalous PDEs in Markov chains: domains of validity and numerical solutions},
  journal={Finance and Stochastics},
  volume={9},
  number={4},
  pages={519--537},
  year={2005},
  publisher={Springer}

}
\bib{RW}{book}{
    author={Rogers, L. Chris G.},
    author={Williams, David},
        title={Diffusions, Markov Processes and Martingales-Volume 2: It\^o Calculus.},
  year={2000},
  publisher={Cambridge University Press},
}

\end{biblist}
\end{bibdiv}

\end{document}